\newtheorem{theorem}{Theorem}
\def\bU{\mathbf{U}}
\def\bF{\mathbf{F}}
\def\bv{\mathbf{v}}
\def\bx{\mathbf{x}}
\def\by{\mathbf{y}}
\def\bA{\mathbf{A}}
\def\bB{\mathbf{B}}
\def\bD{\mathbf{D}}
\def\bF{\mathbf{F}}
\def\bG{\mathbf{G}}
\def\bH{\mathbf{H}}
\def\bI{\mathbf{I}}
\def\bJ{\mathbf{J}}
\def\bN{\mathbf{N}}
\def\bP{\mathbf{P}}
\def\bT{\mathbf{T}}
\def\bU{\mathbf{U}}
\def\bV{\mathbf{V}}
\def\bX{\mathbf{X}}
\def\bY{\mathbf{Y}}
\newtheorem{prop}{Proposition}
\def\tablename{Table}
\author{
\IEEEauthorblockN{Agulla Surya Bharath, Devanshu Singh Gaharwar, Kumar Appaiah and Debasattam Pal}
% \thanks{The authors are with the Department of Electrical Engineering,
%   Indian Institute of Technology Bombay. This work was supported by the Visvesvaraya PhD Scheme of
%   Ministry of Electronics \& Information Technology, Government of
%   India (implemented by the Digital India Corporation).}
}
\title{Design of Discrete-time Matrix All-Pass Filters Using Subspace
  Nevanlinna Pick Interpolation}
\begin{document}

\maketitle
\begin{abstract}
  Unitary matrix-valued functions of frequency are matrix all-pass
  systems, since they preserve the norm of the input vector
  signals. Typically, such systems are represented and analyzed using
  their unitary-matrix valued frequency domain characteristics,
  although obtaining rational realizations for matrix all-pass systems
  enables compact representations and efficient
  implementations. However, an approach to obtain matrix all-pass
  filters that satisfy phase constraints at certain frequencies was
  hitherto unknown. In this paper, we present an interpolation
  strategy to obtain a rational matrix-valued transfer function from
  frequency domain constraints for discrete-time matrix all-pass
  systems. Using an extension of the Subspace Nevanlinna Pick
  Interpolation Problem (SNIP), we design a construction for
  discrete-time matrix all-pass systems that satisfy the desired phase
  characteristics. An innovation that enables this is the extension of
  the SNIP to the boundary case to obtain efficient time-domain
  implementations of matrix all-pass filters as matrix linear constant
  coefficient difference equations, facilitated by a rational
  (realizable) matrix transfer function. We also show that the
  derivative of matrix phase constraints, related to the group delay
  at the interpolating points, can be optimized to control the
  all-pass transfer matrices at the unspecified
  frequencies. Simulations show that the proposed technique for
  unitary matrix filter design performs as well as traditional DFT
  based interpolation approaches, including Geodesic interpolation and
  the popular Givens rotation based matrix parameterization.
%   Moreover, the proposed approach yields a more compact
%   representation of the filter as a rational transfer function matrix,
%   and thus, enables lower complexity implementations
%   in applications such as MIMO wireless communication systems.\\
\end{abstract}

\section{Introduction}
\label{sec:introduction}
Filtering signals is among the most fundamental operations in signal
processing. In general, filtering scalar signals is well understood,
and there is mature theory that discusses filter design and
implementation for both analog and digital scalar filters. However,
with the increased interest in multiple-input multiple-output systems
in several allied areas, the concept of filtering vector signals has
gained importance. Designing precise filters for vector signals
(wherein the signal at each time instant is a real or complex vector)
under various constraints is also interesting from the point of view
of several practical applications, although there has not been much
work in the past in this direction. In this paper, we focus on the
design of discrete-time ``matrix'' all-pass filters, that transform a
vector signal's phase while ensuring that their norm is not altered
for all frequencies. In particular, unlike the standard practice of
using frequency domain transform techniques for filtering, we present
an interpolation based filter design technique that produces matrix
all-pass filters
% that can be realized as linear constant coefficient
% difference equations (LCCDEs)
for practical realizability. This idea has several applications, such
as combined left and right audio signals in case of stereo audio as
well as for feedback in control and communication systems etc. As an
example to show the effectiveness of the proposed techniques, we
consider the MIMO precoders for wireless communication systems which
employ orthogonal frequency division multiplexing (OFDM). These
precoders can be accurately and efficiently realized using time domain
techniques, as opposed to the traditionally used
approaches~\cite{precodingMIMO,LinearPrecodMIMO}.

Matrix filtering with a norm preservation constraint is typically
accomplished using frequency domain
techniques~\cite{LimitFeedbackPrecode,feedback}. Specifically, this
involves computing the Fourier transform of the signal, performing the
all-pass filtering on a per-frequency basis, and using the inverse
Fourier transform, as is common in the case of vector communication
systems~\cite{TseVis,feedback}. However, when the matrix all-pass
filter lends itself to a time domain realization, this method is not
ideal. In particular,
% when only partial information is available about filter coefficients in the
% frequency domain, geodesic based interpolation on a suitable manifold
% of filters in the frequency
% domain~\cite{geodesic,geodesicDef,LimitedFeedval,InterpolMIMOPrec,QuantizedPrecod}
% is amongst the most popular approaches. These approaches primarily
% exploit the unitary nature of the filter in conjunction with their
% continuity in the frequency domain to enable efficient
% interpolation~\cite{Flagdist,Stiefel,GrassmannStiefel,OptimAlgo}. However,
% one key disadvantage of these approaches is that,
when the matrix
all-pass filter has an efficient linear constant coefficient
difference equation (LCCDE) realization, the filter realized using
frequency domain techniques will be inaccurate, and will also result
in less efficient realizations. To address this, we present an
interpolation based matrix all-pass filter design technique that
results in a realizable filter (that can be implemented in the time
domain as an LCCDE) while satisfying the frequency domain
constraints. Our approach extends the classical Subspace Nevanlinna
Pick Interpolation (SNIP) method~\cite{SNIP} that is well-known in the
context of control systems to the ``boundary'' case to obtain matrix
filters that satisfy some prior constraints, while ensuring that the
Fourier transform of its system function is a unitary matrix at all
frequencies, thus obtaining norm-preserving (matrix all-pass) filters.

The classical Nevanlinna interpolation problem has its roots in the problem of synthesis of dynamical systems as passive electrical networks (see~\cite{interp_posit_fns}).
This, as well as all the subsequent extensions of it, however, considers only the situation wherein the interpolating frequencies and the prescribed values of the desired transfer function are strictly within the respective critical regions. For example, in the scalar version of the SNIP dealt with in~\cite{SNIP}, the polar plot of the transfer function must lie strictly within the unit disk, and the frequencies that are given lie on the open right-half of the complex plane. It is important to note that the solution of the classical SNIP crucially depends on these strictness assumptions. In this paper, we push the SNIP to its boundary: we deal with the case wherein the desired transfer function’s polar plot is on the unit disk (i.e., all-pass), and the frequencies, too, are given on the boundary (the unit circle because we consider discrete time systems).

Our key contributions in this paper are as follows:
\begin{itemize}
\item We present an approach to realize a
  discrete-time matrix all-pass filter, when given a feasible set of
  frequency responses (unitary matrices) and group delay matrices
  for a finite set of frequencies \{$\omega_i$\}. Specifically, our solution
  yields a rational matrix $z$-transform for the
  required all-pass filter that satisfies all the given frequency domain
  conditions, and its transfer function matrix is unitary valued for
  all $\omega \in (-\pi, \pi]$. This can be
  viewed as a generalization of the Blaschke interpolation based
  approach that is specific to \emph{scalar} all-pass filter
  design~\cite{scalarallpass,bolotnikov2018boundary} to the \emph{matrix} case.

\item We obtain this \emph{matrix} all-pass filter by extending the SNIP
  technique to the boundary case. Specifically, since the Pick matrix
  in the case of standard SNIP~\cite{SNIP} becomes ill-defined when we  demand a unitary valued solution, we provide a modified approach   using the modified Pick (Schwarz-Pick)
  matrix to generalize the SNIP filter realization to the
  boundary case in discrete-time setting.

\item Finally, we also present an optimization based approach that
  tunes the slopes of the matrix phase response at specific
  frequencies to obtain realizable filters with desirable
  characteristics.
\end{itemize}
The proposed approach for filtering is both novel as well
as efficient in terms of implementation. In particular, prior
approaches to perform all-pass matrix filtering in the frequency
domain have used DFT based techniques that involve at least
$N_{\text{FFT}}$
multiplications~\cite{lou2013comparison,ieee80211}. In addition, these
techniques have largely relied on frequency domain interpolation of
precoders interpolation on manifolds~\cite{pitaval2013coding,Flagdist}
or interpolation of parameterized unitary
matrices~\cite{4114278,Givens_rot}, a technique that is employed in
recent wireless OFDM based standards as
well~\cite{ieee80211}. However, as we show in this paper, in
situations where the all-pass filter has an impulse response that can
be characterized using fewer coefficients, significant savings in
terms of computations can be realized using the SNIP based approach,
while faithfully capturing the frequency domain precoder characteristics.

The rest of the paper is organized as follows: Section~\ref{sec:motivation} outlines the importance of matrix all-pass filter design problem statement, Section~\ref{sec:SNIP} briefly describes the classical SNIP, Section~\ref{sec:DTallpass} describes the discrete-time matrix all-pass filter design problem and its solution,
% Section~\ref{sec:interpwithgd} presents the Induction based interpolation solution construction,
Section~\ref{sec:simulations}
contains the simulations results and interpretations for some practical purposes,
Section~\ref{sec:conclusion} provides some concluding remarks and
discusses future directions.

\emph{Notation}: Unless otherwise specified, bold capital symbols refer to matrices, bold
smallcase symbols correspond to vectors, $\mathbf{I}_m$ refers to an
$m\times m$ identity matrix and $\mathbf{0}_{m\times m}$ refers to an
$m\times m$ all-zero matrix. We also use the abbreviations CT for
continuous-time and DT for discrete-time.  For any matrix $\bA$, $\bA^*$ is the conjugate transpose of $\bA$.

\section{Motivation}
\label{sec:motivation}
To motivate this problem, we first pose the \emph{scalar} all-pass filter
problem: if the frequency response of a discrete-time all-pass filter
is given for certain (finitely many) frequencies, how can we obtain an
all-pass filter that satisfies these constraints? In general, there
exist an infinite number of filters that satisfy these
constraints. Recent work has shown that, if the group delays are also
known at the given frequencies, then a realizable all-pass
filter can be obtained as a Blaschke
product~\cite{scalarallpass}. However, the Blaschke product based
approach is only suited to the solution of the discrete-time
\emph{scalar} all-pass filter design problem, and a direct extension
of the same approach to the case of matrix all-pass filters is not
known.

Discrete-time all-pass filters are used for phase compensation in
various applications. A scalar all-pass filter can be used to correct
phase distortions in scalar signals. To the best of our knowledge,
this concept is yet to be extended to vector signals (MIMO systems),
wherein the input and output are complex vector signals, and the
transformation filter is a matrix valued all-pass filter
(unitary). These matrix all-pass filters are commonly encountered in
several situations, such as MIMO-OFDM systems and stereo audio
systems. In MIMO-OFDM communication systems, when symbols are precoded
with unitary matrices at the transmitter, if the transmitter possesses
some channel state information (CSI), unitary matrix precoding is
typically performed using the right singular vectors (or related
unitary matrices) that are obtained from the singular value
decomposition (SVD) of the channel matrix for every subcarrier
(frequency band)~\cite{TseVis}. Multiplying with a unitary matrix in
the frequency domain is norm preserving, and thus, it can be thought
of as a matrix all-pass filtering operation. In practice, having norm
preserving matrix filters is important in order to satisfy various
constraints, such as power in communication systems, or volume in the
case of audio signals, while only altering the matrix-valued
``phase''. Performing such a phase transformation using the
coefficients of an appropriate discrete-time matrix all-pass filter
with a standard LCCDE implementation would obviate the need for
frequency domain processing, and result in a more faithful
realization, and can significantly reduce the precoding complexity in
modern systems, such as those that use precoding for millimeter wave
wireless systems~\cite{majumder2021optimal,ni2020low}.

\begin{figure}
\begin{center}
\includegraphics[width=0.5\textwidth]{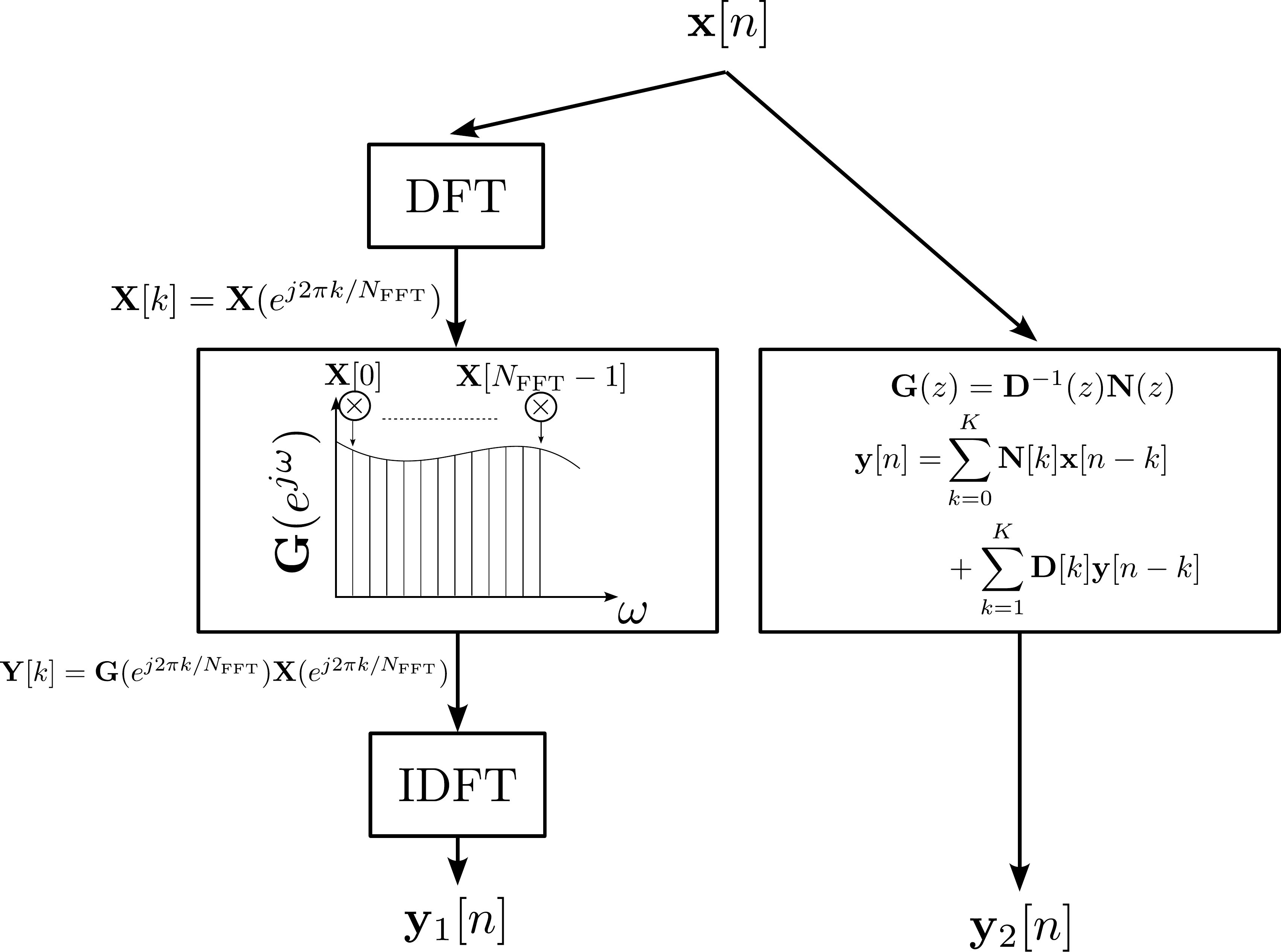}
\caption{\label{fig:block} A schematic comparing the DFT based filter
  implementations with time-domain (LCCDE) based implementations.}
\end{center}
\end{figure}
The block diagram shown in \figurename{\ref{fig:block}} depicts an unknown matrix all pass filter with efficient LCCDE representation,  with $\bx[n]$ as input and $\by[n]$ as output. Our goal is to construct a system  that mimics
% which works same as that of
the unknown matrix all-pass filter. One approach is to use DFT based techniques, wherein output $\by_1[n]$ is produced for the input $\bx[n]$. Another method is to follow a matrix filter design technique and construct a rational matrix all pass filter with LCCDE representation, and this yields output $\by_2[n]$ for the input $\bx[n]$. In this situation only the latter method is accurate.

It is evident from the block diagram \figurename{\ref{fig:block}}
that when the matrix all-pass filter has an efficient LCCDE realization, the filter realized using frequency domain techniques may be inaccurate, and may also result in less efficient realizations. An advantage of
using the SNIP based LCCDE filter design is that the computational
complexity can be reduced with a rational (realizable) all-pass filter method that has a compact representation, being more amenable to time-domain LCCDE implementations (details of the same are discussed further in Section~\ref{sec:simulations}).

In the context of control theory, realizing LCCDE matrix filters that
satisfy frequency constraints (such as bounded $\bH_\infty$ norm and prescribed matrix values at a given collection of finitely many complex numbers) is
generally accomplished using the SNIP technique~\cite{SNIP}, although this approach was primarily designed for the systems, that have non-unitary frequency
domain transfer functions. The requirement of matrix all-pass filter
challenges us to extend the SNIP to the \emph{boundary} case, wherein
the solution is unitary valued for all frequencies ($\omega$), and
thus, SNIP is not directly applicable in our case. We, thus, present a modified SNIP that accommodates the boundary case to address our
needs.

In the subsequent sections, we first discuss the SNIP approach for  continuous-time filter interpolation along with its limitations in the context of the unitary matrix valued (all-pass) constraint (boundary SNIP interpolation) due to the ill-defined Pick matrix. We then present our modified SNIP that addresses this issue and prove that we obtain a filter that satisfies the required conditions.
\section{Subspace Nevanlinna Pick Interpolation}
\label{sec:SNIP}
We now briefly outline the SNIP approach to perform filter
interpolation that is commonly used in the context of continuous-time control
systems to obtain filters that satisfy frequency domain constraints~\cite{SNIP}. We
restrict our consideration to square matrices, since our eventual
focus would be on square unitary matrix valued frequency responses.

A \emph{contractive subspace} ($\mathscr{V}_i \subset \mathbb{C}^{2m}$) is a subspace that satisfies the following property:
\begin{align*}
\resizebox{0.95\columnwidth}{!}{$\left\{ \bv = \left(\begin{array}{l}
\bv_1 \\
\bv_2
\end{array}\right) \in \mathscr{V}_i  \text{, with } \bv_1, \bv_2 \in \mathbb{C}^m  \text{ and } \bv \ne {\mathbf{0}} \right\} \Rightarrow \left\{||\bv_1||_2 > ||\bv_2||_2 \right\}.$}
\end{align*}
We consider $N$ distinct points $\lambda_i$ in the open right-half
complex plane, given together with $N$ subspaces $\mathscr{V}_i
\subset \mathbb{C}^{2m}$, where $1\leq i \leq N$.

The statement of the classical
SNIP is as follows: Given
the $N$ pairs ($\lambda_i$, $\mathscr{V}_i$), find an ${m\times m}$
polynomial matrix $\mathbf{U}$ and a non-singular ${m\times m}$
polynomial matrix $\mathbf{Y}$ such that
\begin{itemize}
    \item $\mathbf{U}$  and  $\mathbf{Y}$ are left co-prime;
    \item ($\mathbf{U}(\lambda_i) \quad \mathbf{-Y}(\lambda_i) )\bv = 0 \quad \forall \quad \bv \in \mathscr{V}_i , 1 \leq i \leq N$;
    \item and $||\mathbf{Y}^{-1}\mathbf{U}||_{\bH_\infty} < 1$.
\end{itemize}
We assume that a full column rank matrix $\mathbf{V}_i \in
\mathbb{C}^{2m \times m}$ is given such that $\text{Im}(\mathbf{V}_i)
=\mathscr{V}_i$ for every $i \in \{1, 2, \ldots N\}$. The Pick matrix
$\mathbf{P}$ for the given data, for all $j, k \in \{1, 2, \ldots N\}$, is defined as
\begin{align}
\label{eqn:SNIPpick}
\begin{split}
%   \mathbf{T}_{{\{\mathbf{V}_i\}}_j}
   &\mathbf{P}[(k-1)m+1:km , (j-1)m+1:jm] := \left( \frac{ {\mathbf{V}_k }^* \mathbf{J} {\mathbf{V}_j }    }{{\lambda_k}^* + {\lambda_j} }\right)  \\
&\text{ where } \mathbf{J} :=
\begin{bmatrix}
\mathbf{I}_{m} & \mathbf{0}_{m\times m}\\
\mathbf{0}_{m\times m} & -\mathbf{I}_{m}
\end{bmatrix}.
\end{split}
\end{align}
Here, $\bJ$ is called a \emph{signature matrix}, and $\bP[a:b,c:d]$ is a submatrix of $\bP$ obtained from contiguous
rows, ranging from the $a$th row to the $b$th row, and contiguous
columns ranging from the $c$th column to the $d$th column.

% To obtain a feasible solution for the filter design problem, the
% Hermitian matrix $\bP$ should be positive definite. This is the
% \emph{necessary and sufficient condition} for solvability of the SNIP, as established in Theorem 4.1 of~\cite{SNIP}.
\begin{prop}
There exists a solution to the Subspace Nevanlinna Interpolation Problem (SNIP) discussed above if and only if the Hermitian matrix $\bP$ is positive definite.
\end{prop}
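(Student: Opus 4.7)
The plan is to prove both directions of the equivalence, leveraging the classical Nevanlinna-Pick machinery adapted to the tangential (subspace) setting. The first step is to recast the interpolation condition in transfer-function form: setting $\bF := \bY^{-1}\bU$, the constraint $(\bU(\lambda_i)\;\,{-}\bY(\lambda_i))\bv = 0$ for every $\bv=(\bv_1^T,\bv_2^T)^T \in \mathscr{V}_i$ is equivalent to $\bF(\lambda_i)\bv_1 = \bv_2$. This is well-posed because the contractive-subspace hypothesis $\|\bv_1\|_2 > \|\bv_2\|_2$ forces the ``first-component'' map $\bv\mapsto \bv_1$ on $\mathscr{V}_i$ to be injective, and is obviously a necessary pointwise tangential compatibility for $\|\bF\|_{\bH_\infty} < 1$.

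For necessity, I would assume a solution $\bF$ with $\|\bF\|_{\bH_\infty}<1$ and invoke the classical fact that the de~Branges-Rovnyak kernel
\[
K_\bF(s,w) \;:=\; \frac{\bI_m - \bF(w)^*\bF(s)}{s + w^*}
\]
is positive definite on the open right half-plane for strictly contractive Schur-class functions. Applying $K_\bF$ at the nodes $(\lambda_j,\lambda_k)$ to the upper components $\bV_{j,1},\bV_{k,1}$ and using the interpolation identity to rewrite $\bV_{k,1}^*\bF(\lambda_k)^*\bF(\lambda_j)\bV_{j,1}=\bV_{k,2}^*\bV_{j,2}$ collapses each block to $(\bV_k^*\bJ\bV_j)/(\lambda_k^*+\lambda_j)$; the resulting block Gram matrix is exactly $\bP$. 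Strict contractivity of $\bF$ together with the full-column-rank of the $\bV_i$ upgrades positive semi-definiteness to $\bP>0$.

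For sufficiency, given $\bP>0$, the plan is to construct a $2m\times 2m$ $\bJ$-inner (Potapov/Krein) rational matrix function $\Theta(s)$ whose pole structure and tangential zero pattern encode the data $\{\lambda_i,\bV_i\}$; the explicit recipe uses $\bP^{-1}$ together with $(\lambda_i,\bV_i)$ to realize $\Theta$ in closed form. The standard Potapov parameterization then represents every SNIP solution as a linear-fractional transform $\bF = (\Theta_{11}\bG + \Theta_{12})(\Theta_{21}\bG + \Theta_{22})^{-1}$ over free strictly contractive Schur parameters $\bG$. The central choice $\bG\equiv \mathbf{0}$ supplies an explicit rational interpolant, and extracting a left matrix-fraction description of it yields the required pair $(\bU,\bY)$.

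The main obstacle will be the sufficiency direction, specifically verifying simultaneously that the constructed $\bF$ (i) is analytic on the open right half-plane with $\|\bF\|_{\bH_\infty}<1$, (ii) satisfies the tangential conditions on each $\mathscr{V}_i$, and (iii) admits a left co-prime fractional description. Strict positive \emph{definiteness} of $\bP$ (as opposed to mere semi-definiteness) is precisely the quantitative ingredient that prevents the Potapov factor $\Theta$ from degenerating on the imaginary boundary and guarantees the strict contraction bound; this is also the reason the argument does not carry over verbatim to the boundary (all-pass) regime that motivates the remainder of the paper.
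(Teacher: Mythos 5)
First, a point of comparison that matters: the paper does not actually prove this proposition --- it states it and attributes it wholesale to Theorem~4.1 of the cited SNIP reference --- so there is no in-paper argument to measure yours against, and your sketch is essentially a reconstruction of the standard tangential Nevanlinna--Pick route that the cited source uses. The necessity half of your sketch is sound: rewriting the tangential condition as $\bF(\lambda_i)\bV_{i,1}=\bV_{i,2}$ is legitimate because the contractive-subspace hypothesis forces each $\bV_{i,1}$ to be invertible (a nonzero $\bv\in\mathscr{V}_i$ cannot have $\bv_1=0$), and substituting this into the positive kernel $(\bI_m-\bF(w)^*\bF(s))/(s+w^*)$ does reproduce exactly the blocks $\bV_k^*\bJ\bV_j/(\lambda_k^*+\lambda_j)$ of \eqref{eqn:SNIPpick}. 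The one step you assert rather than prove is the upgrade from $\bP\succeq 0$ to $\bP\succ 0$; it does go through (write $\bI_m-\bF(w)^*\bF(s)=(1-\rho^2)\bI_m+\bigl(\rho^2\bI_m-\bF(w)^*\bF(s)\bigr)$ with $\rho=\|\bF\|_{\bH_\infty}<1$, and note that the Cauchy matrix at distinct right-half-plane points composed with the invertible $\bV_{i,1}$ contributes a strictly positive Gram term), but as written it is a claim, not an argument.

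The genuine gap is the sufficiency direction, which you yourself flag as the main obstacle: you name the $\bJ$-inner Potapov factor $\Theta$ and the linear-fractional parameterization, but you supply neither the explicit formula for $\Theta$ in terms of $\bP^{-1}$ and the data, nor any verification of analyticity, strict contractivity, the tangential conditions, or left co-primeness of the resulting fraction. Since sufficiency is the entire constructive content of the theorem, the proposal as it stands establishes only one implication. If you want a concrete template for how that construction is actually carried out, Section~4 of the paper does the discrete-time boundary analogue explicitly: a one-point base step, followed by an induction in which the data are updated by a $2m\times 2m$ $\bJ$-related factor $\bH(z)$ and the Pick matrix of the reduced problem is the Schur complement of the original one with respect to its first diagonal block. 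That recursive Schur-complement construction is precisely your ``build $\Theta$ from $\bP^{-1}$'' step made explicit, and adapting it back to the open-right-half-plane, strictly contractive setting is what would close your sufficiency gap.
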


This result is established in Theorem 4.1 of~\cite{SNIP}.
One key disadvantage with the SNIP is that the \emph{contractive
 subspace} structure prevents it from being applicable to the case of
matrix all-pass filter design (as detailed in the following sections).
In Section~\ref{sec:DTallpass}, we adapt the SNIP framework to design
matrix all-pass filters.

\section{Modified SNIP for DT matrix all-pass filters}
\label{sec:DTallpass}
% The previous section considered the problem of rational matrix
% all-pass filter design for the continuous-time case.
In this section, we present an adaptation of SNIP to the discrete-time
all-pass filter design case, wherein the filter constraints are
presented for a sampled system. We remark that this case is directly
applicable to several signal processing applications, such as stereo
audio processing, MIMO-OFDM precoding etc.
% \subsection{{The DT matrix all-pass filter design problem}}

\emph{Note:} We refer to square matrices $\bX$ and
$\bY$ of size ${k\times k}$ as \emph{unitarily similar} if there exists an
unitary ${k\times k}$ matrix $\bT$ such that $\bY$ =  $\bT^*\bX\bT$.

\subsection{{Problem Statement}}
\label{sec:DTproblem}
The discrete-time matrix all-pass filter design problem can be stated
as follows: given data set $\mathbb{D}=\{ (\omega_{i}, \mathbf{A}_{i},
\mathbf{\Gamma}_{i} | i = 1,...,n \}$, where $\omega_{i} \in (-\pi,\pi], \mathbf{A}_{i}$
$\in \mathbb{C}^{m\times m}$ is unitary, and $\mathbf{\Gamma}_{i} \in \mathbb{C}^{m\times m}$ is a positive
definite Hermitian matrix ($\mathbf{\Gamma}_{i} =
\mathbf{\Gamma}^{*}_{i}$), we wish to obtain a rational transfer function matrix
$\mathbf{G}(z) \in \mathbb{C}(z)^{m\times m}$ that satisfies the following conditions
\begin{subequations}
\label{eqn:discrete}
\begin{align}
\begin{split}
&\label{eqn:pb1}\mathbf{G}(e^{j\omega_{i}}) = \mathbf{A}_{i}   \quad  \forall \quad i = 1,...,n,
\end{split}\\
\begin{split}
&\label{eqn:pb2}\mathbf{G}^{*}(e^{j\omega})\mathbf{G}(e^{j\omega}) = \mathbf{I}_{m}   \quad \forall \quad  \omega \in (-\pi,\pi],
\end{split}\\
\begin{split}
&\label{eqn:pb3} \text{ if }  \mathbf{F}_{i} = j\mathbf{G}^{*}(e^{j\omega_{i}})
                            \frac{d\mathbf{G}(e^{j\omega_{i}})}{d\omega}
                            \text{ then }\\
&\;\;\mathbf{F}_{i}  \text{ and }  \mathbf{\Gamma}_{i}  \text{ are unitarily similar matrices} \quad
       \forall \quad i = 1,...,n
\end{split}
\end{align}
\end{subequations}

The condition described by equation \eqref{eqn:pb2} provides us an
all-pass filter transfer function, that also matches
the desired (unitary) frequency responses $\mathbf{A}_{i}$ at the
frequencies $\omega_{i}$ as constrained by equation
\eqref{eqn:pb1}. Specifying \emph{only} these two constraints leads to
infinite possible all-pass filters (as in the scalar case~\cite{scalarallpass}). Therefore, to restrict the set of
possible solutions, we further specify $\mathbf{\Gamma}_{i}$ that are
positive definite matrices and constrain the matrices $\mathbf{F}_{i}$
at $\omega_{i}$ (equation~\eqref{eqn:pb3}). The matrices $\mathbf{F}_{i}$ correspond to the
derivative of the matrix-valued phase and thus, are referred to as \emph{group
delay matrices}~\cite{modes}. The group delay matrices $\{\mathbf{F}_{i}\}$ are
Hermitian $\forall i \in \{1,2,..n\}$. This is a direct consequence of the
unitary constraint imposed in \eqref{eqn:pb2}, and can be verified by
differentiating \eqref{eqn:pb2} with respect to $\omega$:
\begin{align*}
    \mathbf{G}^{*}(e^{j\omega})
  \frac{d\mathbf{G}(e^{j\omega})}{d\omega} +
  \frac{d\mathbf{G}^{*}(e^{j\omega})}{d\omega}\mathbf{G}(e^{j\omega})
  = \mathbf{0}_{m\times m}\\
\Rightarrow    j\mathbf{G}^{*}(e^{j\omega_{i}})\frac{d\mathbf{G}(e^{j\omega_{i}})}{d\omega} =  -j\frac{d\mathbf{G}^{*}(e^{j\omega_{i}})}{d\omega}\mathbf{G}(e^{j\omega_{i}}).
\end{align*}
It is evident from the above problem statement that the SNIP~\cite{SNIP} is very similar to
our problem, but the SNIP does not require any slope or group delay
constraints (like \eqref{eqn:pb3} in our problem statement). Moreover, in the SNIP,
the norm of the output vector is strictly smaller than the norm of the
input vector, whereas in our case, they are equal.

Thus, our current all-pass filter design problem can be considered as
an extension of the SNIP problem to the \textit{boundary} case, wherein
the norm of the output vector is exactly equal to the norm of the
input vector. This requirement prevents the SNIP from being directly
applicable to our problem, and motivates the formulation of a
modified version that can be used for boundary problems as well.

\subsection{{Formulating the modified Pick matrix}}
\label{sec:modifpickmat}
To enable a step-wise solution to the matrix all-pass interpolation
problem, we suitably modify the data set $\mathbb{D}$  to facilitate an inductive solution in subsequent steps. The data set $\mathbb{D}$ is altered by replacing $\mathbf{A}_{i}$ $\in \mathbb{C}^{m\times m}$ with $\mathbf{B}_{i}$ $\in \mathbb{C}^{2m\times m}$ that satisfy
% We modify the data set $\mathbb{D}$ by replacing $\mathbf{A}_{i}$ $\in \mathbb{C}^{m\times m}$  with $\mathbf{B}_{i}$ $\in \mathbb{C}^{2m\times m}$
% that satisfy
\begin{align*}
    \text{span}_{\mathbb{C}}\mathbf{B}_{i} = \text{span}_{\mathbb{C}}
\begin{bmatrix}
\mathbf{I}_m & \mathbf{A}_{i}^{T}
\end{bmatrix}^{T}
\quad  \forall \quad i = 1,...,n.
\end{align*}
% The spans of $\mathbf{B}_{i}$ and $\mathbf{A}_{i}$ are isomorphic.
Now, in order to retain the  unitary nature of $\mathbf{A}_{i}$, we define \emph{neutral}
 $\mathbf{B}_{i}$ that satisfy
 \begin{align}
 \label{eqn:signature}
    \mathbf{B}_{i}^{*}\mathbf{J}\mathbf{B}_{i} = \mathbf{0}_{m\times m}  \qquad \text{where} \quad \mathbf{J} =
\begin{bmatrix}
\mathbf{I}_{m} & \mathbf{0}_{m\times m}\\
\mathbf{0}_{m\times m} & -\mathbf{I}_{m}
\end{bmatrix}.
\end{align}
The modified form of our data set is given by:\\ $\mathbb{D} = { \{ (\omega_{i}, \mathbf{B}_{i}, \mathbf{\Gamma}_{i}  | i = 1,...,n \}}$, where $\omega_{i} \in (-\pi,\pi]$, $\mathbf{B}_{i} $ $\in \mathbb{C}^{2m\times m}$ is neutral, $\mathbf{\Gamma}_{i} $ = $\mathbf{\Gamma}^{*}_{i} $  $\in \mathbb{C}^{m\times m}$ is positive definite.

We now set out to obtain the rational transfer function matrix $\mathbf{G}(z)$ $\in \mathbb{C}(z)^{m\times m}$  that satisfies the constraint equation \eqref{eqn:discrete}, with $\mathbf{A}_{i}$ replaced with $\mathbf{B}_{i,2}$ $\mathbf{B}_{i,1}^{-1}$, where,
\begin{align*}
\resizebox{0.95\columnwidth}{!}{$
    \mathbf{B}_{i} =
\begin{bmatrix}
\mathbf{B}_{i,1}^{T} & \mathbf{B}_{i,2}^{T}
\end{bmatrix}^{T}
\text{ and } \mathbf{B}_{i,1} , \mathbf{B}_{i,2}  \in \mathbb{C}^{m\times m}
\quad  \forall i \in \{1,2,\ldots n\}$}.
\end{align*}
We cannot use the the classical SNIP solution construction
from~\cite{SNIP} directly by substituting $\mathbf{B}_i$ and
$e^{j\omega_i}$ in place of $\mathbf{V}_i$ and $\lambda_i$
respectively in the definition of the Pick matrix block
\eqref{eqn:SNIPpick} and matrix $\bJ$ defined in
\eqref{eqn:signature}. This is because doing so results in a
$\frac{0}{0}$ form along the diagonal blocks of the Pick matrix due to
unitary nature of matrix $\mathbf{A}_i$:
% By using classical definition of Pick matrix block in
% \eqref{eqn:SNIPpick} and signature matrix ($\bJ$) defined in \eqref{eqn:signature}, we encounter a $\frac{0}{0}$ in DT case, similar to \eqref{eqn:CTdiagpickILL} in CT case.
\begin{align*}
\resizebox{0.95\columnwidth}{!}{$    \mathbf{P}[(i-1)m+1:im , (i-1)m+1:im] =  \frac{{\mathbf{B}_{i}}^{*}\mathbf{J}\mathbf{B}_{i} }{1 - e^{j(\omega_{i}-\omega_{i})}}$}
    \\
  \resizebox{0.35\columnwidth}{!}{$    = \frac{\mathbf{I}_m -
  {\mathbf{A}_{i}}^{*}\mathbf{A}_{i}}{1 - 1} =
  \frac{\mathbf{0}_{m\times m}}{0}$}.
\end{align*}
% where, $\bP[a:b,c:d]$ is a submatrix of $\bP$
% obtained from contiguous rows from the $a$th row to the $b$th
% row, and contiguous columns from the $c$th column to the $d$th
% column.\\
Therefore, we modify the Pick matrix to replace the the  ill-defined  matrices forming the diagonal blocks, and establish a set of \emph{necessary and sufficient conditions} on the modified Pick matrix for solvability of the DT matrix all-pass filter design problem.

The modified ${nm\times nm}$ Pick matrix $\mathbf{P}$ is defined as follows:
\begin{equation}
\label{eqn:pickmat}
\resizebox{0.91\columnwidth}{!}{$
    \mathbf{P}[(i-1)m+1:im , (k-1)m+1:km] :=
    \begin{cases}
    \mathbf{\Gamma}_{i}\quad,& \text{if } i = k\\
    \frac{\mathbf{B}_{i}^{*}\mathbf{J}\mathbf{B}_{k}}{1 - e^{j(\omega_{i}-\omega_{k})}},              & \text{otherwise}
    \end{cases}
    % \mathbf{\Gamma}_{i}\delta_{ik} + \frac{{\mathbf{B}_{i}}^{*}\mathbf{J}\mathbf{B}_{k} (1-\delta_{ik})}{1 - e^{j(\omega_{i}-\omega_{k})}}
    $}
\end{equation}
for all $i,k \in \{1,..,n \}$. We now state a theorem that guarantees a solution for our problem using this Pick matrix.
\begin{theorem}
  Given the data set $\mathbb{D} = \{ (\omega_{i}, \mathbf{A}_{i},
  \mathbf{\Gamma}_{i}) | i = 1,...,n \}$  with $\omega_{i}$ distinct, a rational transfer function matrix  $\mathbf{G}(z)$ $\in \mathbb{C}(z)^{m\times m}$  that satisfies \eqref{eqn:pb1}, \eqref{eqn:pb2} and \eqref{eqn:pb3} exists \emph{if and only if} the Pick matrix defined in~\eqref{eqn:pickmat} is positive definite.
\label{thm:thm2}
\end{theorem}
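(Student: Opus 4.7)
The plan is to prove the two implications separately, following the Potapov/Schur-type strategy familiar from the classical SNIP but adapted to the boundary. For necessity, I would start from an assumed solution $\bG(z)$ and derive positive definiteness of $\bP$ by recognizing that its blocks arise naturally from a Gram-type construction tied to $\bG$. For sufficiency, I would build $\bG(z)$ inductively, peeling off one interpolation condition at a time by multiplying by an elementary boundary $\bJ$-inner factor, and show that the resulting $(n-1)$-point data set admits a positive definite Pick matrix of the same form.

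For necessity, I would start from $\bG(e^{j\omega_i}) = \bA_i$ and write $\bB_i = \left[\bI_m \; \bA_i^*\right]^*$ (any representative of the neutral subspace). For $i \neq k$ the identity
\begin{equation*}
\bB_i^* \bJ \bB_k = \bI_m - \bA_i^* \bA_k = \bG^*(e^{j\omega_i})\bG(e^{j\omega_i}) - \bG^*(e^{j\omega_i})\bG(e^{j\omega_k})
\end{equation*}
combined with the unitarity \eqref{eqn:pb2} yields a Cauchy-kernel-style expression whose denominator $1 - e^{j(\omega_i - \omega_k)}$ matches the off-diagonal block in \eqref{eqn:pickmat}. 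For $i = k$, I would expand $\bG(e^{j(\omega_i+\epsilon)})$ to first order and use the definition $\bF_i = j\bG^*(e^{j\omega_i}) d\bG(e^{j\omega_i})/d\omega$ together with the unitary similarity in \eqref{eqn:pb3} to argue that the diagonal block is unitarily equivalent to (and may be chosen as) $\mathbf{\Gamma}_i$. Positive definiteness of the whole $\bP$ should then follow by writing it as a Gram matrix in a model space built from $\bG$, or equivalently by a contour integral of a positive semidefinite integrand, in the spirit of boundary Nevanlinna--Pick arguments in \cite{bolotnikov2018boundary}.

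For sufficiency, I would proceed by induction on $n$. In the base case $n=1$, I would give an explicit one-point $\bJ$-inner rational matrix function $\bTheta_1(z)$ of McMillan degree $m$ satisfying $\bTheta_1(e^{j\omega_1})\bB_1 = 0$ and realizing the prescribed group delay $\mathbf{\Gamma}_1$, then verify that $\bG = \bTheta_{1,21}\bTheta_{1,11}^{-1}$ (linear fractional action on, say, the constant $\mathbf{0}$ coefficient) meets \eqref{eqn:pb1}--\eqref{eqn:pb3}. For the inductive step, I would pick the last data point $(\omega_n, \bB_n, \mathbf{\Gamma}_n)$, construct an analogous elementary boundary factor $\bTheta_n(z)$, and write any candidate $\bG(z) = \bTheta_n(z) \star \widetilde{\bG}(z)$ (linear fractional transformation). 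Translating the remaining $n-1$ interpolation conditions through this transformation yields a new data set $\widetilde{\mathbb{D}}$ at the same frequencies, and the central algebraic identity to establish is that the Schur complement of the $\mathbf{\Gamma}_n$ block in $\bP$ is exactly the Pick matrix $\widetilde{\bP}$ of $\widetilde{\mathbb{D}}$. Positive definiteness of $\bP$ then forces $\widetilde{\bP} \succ 0$, closing the induction.

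The main obstacle, which I would isolate and attack first, is the construction and analysis of the elementary boundary $\bJ$-inner factor $\bTheta_i(z)$. In the interior case (classical SNIP) this is routine because $e^{j\omega_i}$ lies off the unit circle, but here $\bTheta_i$ must be analytic across the pole-zero pair collapsing onto $e^{j\omega_i}$, remain $\bJ$-unitary on the unit circle, and encode the correct \emph{derivative} data $\mathbf{\Gamma}_i$. I expect to construct $\bTheta_i(z)$ as a rational matrix of the form $\bI_{2m} + \frac{z - e^{j\omega_i}}{\text{(suitable)}} \cdot \bB_i \mathbf{\Gamma}_i^{-1} \bB_i^* \bJ \cdot (\text{correction})$, and the technical heart of the proof will be verifying the Schur-complement identity mentioned above, which is where both the neutrality $\bB_i^* \bJ \bB_i = 0$ and the positive definiteness of $\mathbf{\Gamma}_i$ must cooperate precisely to keep the induction on track.
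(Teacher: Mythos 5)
Your proposal follows essentially the same architecture as the paper's proof, and for the sufficiency half it is the same proof: the paper's elementary factor $\bH(z)$ is exactly $2(z-z_1)\bI_{2m} - (z+1)(z_1+1)\,\bB_1\mathbf{\Gamma}_1^{-1}\bB_1^*\bJ$, i.e.\ (after normalizing by $2(z-z_1)$) precisely the boundary $\bJ$-inner form $\bI_{2m} - \tfrac{(z+1)(z_1+1)}{2(z-z_1)}\bB_1\mathbf{\Gamma}_1^{-1}\bB_1^*\bJ$ you predicted, the update $\begin{bmatrix}\bD \\ -\bN\end{bmatrix} = \bH(z)^*\begin{bmatrix}\widehat{\bD}\\ -\widehat{\bN}\end{bmatrix}$ is your linear-fractional step, and the inductive engine is exactly the Schur-complement identity you single out (the paper defines $\widehat{\mathbf{\Gamma}}_i$ so that the new Pick matrix is the Schur complement of the old one with respect to $\mathbf{\Gamma}_1$; it peels off the first node rather than the last, which is immaterial). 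The one genuine methodological difference is in the necessity half: where you appeal to an abstract Gram-matrix/model-space or contour-integral positivity ``in the spirit of boundary Nevanlinna--Pick,'' the paper gives the concrete time-domain realization of that positivity via dissipative-systems theory. It regularizes each boundary node to $e^{\epsilon_i + j\omega_i}$ with $\epsilon_i>0$, drives the lossless system with $\bx = \sum_i \bv_i e^{(\epsilon_i+j\omega_i)n}$, sums the supply rate $\bx^*\bx - \by^*\by$ over $n\le 0$, and lets $\epsilon_i\to 0^+$; L'Hospital's rule then delivers $\bF_i$ (hence, by unitary similarity, $\mathbf{\Gamma}_i$) as the diagonal blocks and the Cauchy kernel $({\bI_m-\bA_i^*\bA_k})/({1-e^{j(\omega_i-\omega_k)}})$ off the diagonal, with strict (not merely semi-) definiteness coming from the fact that a nonzero input to a system unitary on the circle cannot produce zero output. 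That regularization-and-limit mechanism is exactly the piece your sketch leaves at the level of ``should follow,'' and it is where the boundary character of the problem (the $0/0$ diagonal blocks) is actually resolved; your route would work too, but it would have to reproduce an equivalent limiting argument rather than a routine interior Gram computation.
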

\begin{proof}
 Explicit construction of a solution is specified in Section~\ref{sec:interpwithgd} to prove the \emph{if} part.
%  The proof of only if part goes as follows.
 To prove the \emph{only if} part, we show that, for a given data set
 $\mathbb{D} = \{ (\omega_{i}, \mathbf{A}_{i}, \mathbf{\Gamma}_{i}) |
 i = 1,...,n \}$ if we have  $\mathbf{G}(z)$ $\in
 \mathbb{C}(z)^{m\times m}$  that satisfies \eqref{eqn:pb1},
 \eqref{eqn:pb2} and \eqref{eqn:pb3} then
 the Pick matrix for $\mathbb{D}$ defined in~\eqref{eqn:pickmat} is positive definite.

Since our interpolation problem
concerns itself with the discrete-time case, both the input
($\mathbf{x}[n]$) and output ($\mathbf{y}[n]$) to the matrix all-pass
filter are discrete-time vector signals. For simplicity, we represent
${\mathbf{x}}[n]$ as ${\mathbf{x}}$ and $ {\mathbf{y}[n]}$ as
$\mathbf{y}$, suppressing the time index. In the language of the
theory of dissipative systems, the all-pass transfer function
represents a {\em lossless} DT dynamical system with the \emph{supply
  rate} defined as $\bx^*\bx-\by^*\by$. From the fundamental theorem
of dissipative systems, it follows that lossless dissipative systems
satisfy the following equation
~\cite{dissipative,KANEKO200031,StorageISQuadratic}. That is,
%  From Definition 6 in~\cite{dissipative} , for a continuous-time \emph{lossless} dissipative dynamical system we observe the following
% For a discrete-time \emph{lossless} dissipative dynamical system we observe the following
 \begin{equation}
 \label{eqn:lossless}
 \resizebox{0.9\columnwidth}{!}{$
    % \text{Supply rate}  = \frac{d}{dt}(\text{Storage function}).
    \text{\emph{supply rate}}[n] = \text{\emph{storage function}}[n+1] - \text{\emph{storage function}}[n] $}
\end{equation}
% In Theorem 1 of TODO, for $\mathbf{u, y} \in \mathcal{C}^{\infty }(\mathbb{R}, \mathbb{C}^{m})$, where $\mathcal{C}^{\infty }(\mathbb{R}, \mathbb{C}^{m})$ denotes the set
% of infinitely differentiable $m$-valued complex functions, the following holds:
% \begin{equation}
% \label{eqn:contloss}
%      \int_{-\infty}^{0} (\bu^*\bu - \by^*\by)  dt  \geq \mathbf{0}
% \end{equation}
% This equation is referred to as the \emph{charging
%   phenomenon} TODO.
% Since our interpolation problem
% concerns itself with the discrete-time case, both the input
% ($\mathbf{x}[n]$) and output ($\mathbf{y}[n]$) to the matrix all-pass
% filter are discrete-time vector signals. For simplicity, we represent
% ${\mathbf{x}}[n]$ as ${\mathbf{x}}$ and $ {\mathbf{y}[n]}$ as
% $\mathbf{y}$, suppressing the time index.In the language of the theory of dissipative systems, the allpass nature of the transfer function is described as a {\em lossless} DT dynamical system with the supply rate defined as x*x-y*y. From the fundamental theorem of dissipative systems, it follows that lossless dissipative systems satisfy the following equation.
% The matrix all-pass filter
% preserves the norm in the frequency domain, and thus, a valid \emph{supply rate} would be
% $\bx^*\bx - \by^*\by$.
where \{\emph{supply rate}[$n$]\} is the {supply rate} value computed at time index $n$ and similarly \{\emph{storage function}[$n$]\} is the storage function value computed at time index $n$.

Therefore, since the \emph{storage function} is a positive definite function of state variables~\cite{StorageISQuadratic}, using the concepts of discrete-time \emph{lossless} dissipative dynamical systems explained in~\cite{KANEKO200031}, we observe that
% similar philosophy that was followed in
% TODO
% Theorem~\ref{thm:thm1} we extend
% the inequality \eqref{eqn:charging} to discrete case as follows:
\begin{equation}
\label{eqn:disctloss}
    \sum_{n=-\infty}^{0} (\bx^*\bx - \by^*\by)  \geq 0
\end{equation}
% The detailed analysis on the correctness of \eqref{eqn:disctloss} is
% included in  Appendix~\ref{sec:appendix} of this paper.

Now, consider $\mathbf{x} = \mathbf{v}e^{({\epsilon_{i} +
    j\omega_{i}})n}$, where $\mathbf{v}$ is an arbitrary vector in
$\mathbb{C}^m$, $\epsilon_{i} $ is an arbitrary small positive real number, and
$\omega_{i}$ is the $i$-th interpolation frequency. The input to the system is $\mathbf{x}$, and hence, the output is $\mathbf{y} = \mathbf{G}\mathbf{x}$. Thus, we get $\mathbf{y} = \mathbf{G}(e^{\epsilon_{i} + j\omega_{i}})\bv e^{({\epsilon_{i} + j\omega_{i}})n}$. Let $\mathbf{G}_i:= \mathbf{G}(e^{\epsilon_{i} + j\omega_{i}}) $.
Now, substituting $\mathbf{x, y}$ in~\eqref{eqn:disctloss}, we get
\begin{align*}
    \sum_{n=-\infty}^{0} (\bx^*\bx - \by^*\by) &= \mathbf{v}^* (\bI_m - \bG_i^*\bG_i)\bv \sum_{n=-\infty}^{0} (e^{2{\epsilon_{i} }n}) \\
     &=  \frac{\mathbf{v}^* (\bI_m - \bG_i^*\bG_i)\bv}{1-e^{-2{\epsilon_{i}}}} \geq {0}.
\end{align*}
Taking limit $\epsilon_{i} \xrightarrow{}0^+$ in the above equation, we get
$\lim_{\epsilon_{i} \to 0^+}  \frac{\mathbf{v}^* (\bI_m - \bG_i^*\bG_i)\bv}{1-e^{-2{\epsilon_{i} }}} \geq \mathbf{0}$.

Now, applying L'Hospital's rule yields
\begin{align*}
  \lim_{\epsilon_{i} \to 0^+} & \frac{{\bv^* (\bI_m - \bG_i^*\bG_i)\bv}}{1-(e^{-2{\epsilon_{i}}})}   =  \lim_{\epsilon_{i} \to 0^+}  \frac{\frac{d}{d\epsilon_{i}}[\bv^* (\bI_m - \bG_i^*\bG_i)\bv]}{2}\\
    &= \lim_{\epsilon_{i} \to 0^+}  -\frac{\mathbf{v}^*\frac{d}{d\epsilon_{i}}[ \bG_i^*\bG_i)]\bv}{2}\\
    &=\lim_{\epsilon_{i} \to 0^+}  -\frac{\mathbf{v}^*[\frac{d}{d\epsilon_{i}} (\bG_i^*) \bG_i + \bG_i^*\frac{d}{d\epsilon_{i}}(\bG_i)  ]\bv}{2}.
\end{align*}
We know that, the following equalities hold,
% \begin{align*}
% \resizebox{\columnwidth}{!}{$
% \frac{d}{d\epsilon_{i}}[\mathbf{G}(e^{\epsilon_{i} + j\omega_{i}})] = -j\frac{d}{d\omega_{i}}[\mathbf{G}(e^{\epsilon_{i} + j\omega_{i}})] \text{ and }
% \frac{d}{d\epsilon_{i}}[\mathbf{G}^*(e^{\epsilon_{i} + j\omega_{i}})] = j\frac{d}{d\omega_{i}}  [\mathbf{G}^*(e^{\epsilon_{i} + j\omega_{i}})]$}
% \end{align*}
\begin{align*}
% \resizebox{\columnwidth}{!}{$
\frac{d}{d\epsilon_{i}}\mathbf{G}_i
% ({\epsilon_{i} + j\omega_{i}})
 = -j\frac{d}{d\omega_{i}}\mathbf{G}_i
% ({\epsilon_{i} + j\omega_{i}})
 \text{ and }
\frac{d}{d\epsilon_{i}}(\mathbf{G}^*_i
% ({\epsilon_{i} + j\omega_{i}})
) = j\frac{d}{d\omega_{i}}  (\mathbf{G}^*_i
% ({\epsilon_{i} + j\omega_{i}})
).
% $}
\qquad \quad
\end{align*}
So,
\begin{align*}
\lim_{\epsilon_{i} \to 0^+}&
  \frac{\mathbf{v}^*[\frac{d}{d\epsilon_{i}} (\bG_i^*) \bG_i +
  \bG_i^*\frac{d}{d\epsilon_{i}}\bG_i  ]\bv}{2} \\ &=  j\mathbf{v}^*[ \mathbf{G}^{*}(e^{j\omega_{i}}) \frac{d\mathbf{G}(e^{j\omega_{i}})}{d\omega}  ]\bv= \mathbf{v}^*\mathbf{F}_{i}\bv \geq 0.
\end{align*}
Since $\bF_i$ and $\mathbf{\Gamma}_i$ are Hermitian and unitarily similar matrices, we obtain
\begin{align}
\label{eqn:diagPickPSD}
\mathbf{v}^*\mathbf{\Gamma}_{i}\bv \geq {0},\quad \forall \mathbf{ v} \in \mathbb{C}^{m} \text{ and } \forall i \in \{1,2,\ldots n\}.
\end{align}
It is evident that equality holds in \eqref{eqn:diagPickPSD} only if
$\mathbf{v} = \mathbf{0}_{m \times 1}$, because if $\mathbf{v}$ is non-zero, then the
corresponding $\mathbf{y}$ cannot be zero, because $\bG(e^{j\omega})$
is unitary for all $\omega \in (-\pi,\pi]$ and $\mathbf{y} = \mathbf{G}(e^{\epsilon_{i} + j\omega_{i}})\bv e^{({\epsilon_{i} + j\omega_{i}})n}$.

Next, consider  $\mathbf{x} = \sum_{i=1}^{n} \mathbf{v}_i
e^{({\epsilon_{i} + j\omega_{i}})n}$, where  $\mathbf{v}_{i}$ are
arbitrary vectors in $ \mathbb{C}^m $. Correspondingly, $\mathbf{y} = \sum_{i=1}^{n} \mathbf{G}_{i}\mathbf{v}_i e^{({\epsilon_{i} + j\omega_{i}})n}$. By a calculation similar to the one above, we get
\begin{align}
\label{eqn:DTPick}
    \resizebox{0.95\columnwidth}{!}{$
    \sum_{n=-\infty}^{0} ( \bx^*\bx - \by^*\by) = \begin{bmatrix}
    { \mathbf{v}_1^* }& { \mathbf{v}_2^*}&{...}&{\mathbf{v}_n^*}
    \end{bmatrix}
    {{\mathbf{T}}\begin{bmatrix} {
    \mathbf{v}_1^T }& {\mathbf{v}_2^T}&{...}&{\mathbf{v}_n^T}
    \end{bmatrix}}^T $}
\end{align}
where, $\mathbf{T}$ is a $nm\times nm$ matrix.
% It is evident from~(\ref{eqn:diagPickPSD}) that,
The $i$th $m\times m$ diagonal block of $\mathbf{T}$ is $[\mathbf{T}]_{ii} = \mathbf{\Gamma}_i$ as derived in \eqref{eqn:diagPickPSD}. The off-diagonal ($i,k$)th $m\times m$ block  entries of $\mathbf{T}$ are specified as follows:
\begin{align}
\label{eqn:Pickoffdiag}
    \begin{split}
    \mathbf{[T]}_{ik} = \lim_{\epsilon_{i},\epsilon_{k} \to 0^+}  &\sum_{n=-\infty}^{0}{(\bI_m - \bG^*_{i}\bG_{k})}{e^{(\epsilon_{k}+\epsilon_{i} + j(\omega_{k}-\omega_{i}))n}}\\
    &= \frac{(\bI_m - \bG^*_{i}\bG_{k})}{1-e^{j(\omega_{i}-\omega_{k})}}
    = \frac{{(\bI_m - \bA^*_{i}\bA_{k})}}{1-e^{j(\omega_{i}-\omega_{k})}}.
    \end{split}
\end{align}
Thus, we see that $\mathbf{T}$ is, in fact, the Pick matrix $\bP$ that we have defined in \eqref{eqn:pickmat}. In addition, from \eqref{eqn:disctloss} and \eqref{eqn:DTPick}, we have
\begin{align*}
    \begin{bmatrix}
    \mathbf{v}_1^* & \mathbf{v}_2^*&...&\mathbf{v}_n^*
    \end{bmatrix}
    {\mathbf{P}\begin{bmatrix}
    \mathbf{v}_1^T & \mathbf{v}_2^T&...&\mathbf{v}_n^T
    \end{bmatrix}}^T  \geq \mathbf{0}_{m\times m} \quad \forall \bv_i %\in \mathbb{C}^m
\end{align*}
where equality holds if and only if all $\mathbf{v}_{i}$s are
$\mathbf{0}_{m\times 1}$ vectors. Thus, the Pick matrix $\mathbf{P}$ defined for
data set $\mathbb{D}$ is positive definite.
\end{proof}
\subsection{Induction based solution construction}
As mentioned earlier in Theorem~\ref{thm:thm2}, if the Pick matrix
(defined in \eqref{eqn:pickmat}) for the given data set $\mathbb{D}$
is positive definite, then a matrix all-pass filter that satisfies the
constraints \eqref{eqn:pb1}, \eqref{eqn:pb2} and \eqref{eqn:pb3} can
be constructed. (Proof of the \emph{if} part of the Theorem~\ref{thm:thm2}).
% (\emph{if} part of Theorem~\ref{thm:thm2}).

We construct a solution to the problem discussed above using induction on the number of data points $n$ as follows. First, we define
\label{sec:interpwithgd}
\begin{align*}
\mathbf{B}_i = {\begin{bmatrix}
\mathbf{I}_m & \mathbf{A}_i^T
\end{bmatrix}}^T ,\qquad i = 1,2,..,n.
\end{align*}
\textbf{Base Step:} $n = 1, \mathbb{D} = (\omega_{1},
\mathbf{B}_{1}, \mathbf{\Gamma}_{1}) $.\\
Let,
${z} = e^{j\omega}$ and ${z}_1 = e^{j\omega_1}$. We first construct the following matrix polynomials. Define,
\begin{align}
\begin{split}
\mathbf{N}(z) &:= ({z}-{z}_1)\mathbf{I}_m  + \frac{\mathbf{A}_1 \mathbf{\Gamma}^{-1}_1 (\mathbf{I}_m -  \mathbf{A}^{*}_1){z}_1}{(1+{z}_1)}(1+{z})\\
\mathbf{D}(z) &:= ({z}-{z}_1)\mathbf{I}_m  + \frac{ \mathbf{\Gamma}^{-1}_1 (\mathbf{I}_m -  \mathbf{A}^{*}_1){z}_1}{(1+{z}_1)}(1+{z}).
\end{split}
\end{align}
Since, $\mathbf{\Gamma}_1$ is positive definite, $\bN(z)$ and $\bD(z)$ are well-defined.
Correspondingly, we define $\mathbf{G}(z) := \mathbf{N}(z)
\mathbf{D}(z)^{-1}$.  We now present a brief verification that confirms
that the base step of this induction indeed satisfies the required
conditions. To this end, we note that
$\mathbf{N}({z}_1)\mathbf{D}({z}_1)^{-1} = \mathbf{A}_1$, thus satisfying  \eqref{eqn:pb1}. We know that,
$\mathbf{G}^*(e^{j\omega})\mathbf{G}(e^{j\omega}) = \mathbf{I}_{m}$
% for all $\omega \quad \in (-\pi,\pi]$
if $\mathbf{N}^{*}({e^{j\omega}})\mathbf{N}({e^{j\omega}}) =
\mathbf{D}^{*}({e^{j\omega}})\mathbf{D}({e^{j\omega}})$ for all $\omega \quad \in (-\pi,\pi]$, which can be easily
verified by performing simple polynomial multiplications, thus satisfying \eqref{eqn:pb2}. Next, we see that \eqref{eqn:pb3} can be verified
as follows:
\begin{align*}
\resizebox{\columnwidth}{!}{$
 \mathbf{G}(z) \mathbf{D}(z) := \mathbf{N}(z)
\Rightarrow
 \frac{d\mathbf{G}({z}_1)}{d\omega} \mathbf{D}({z}_1) + \mathbf{G}({z}_1) \frac{d\mathbf{D}({z}_1)}{d\omega} =\frac{d\mathbf{N}({z}_1)}{d\omega}$}
\end{align*}
\begin{align*}
\Rightarrow j\bG^{*}({z}_1)\frac{d\mathbf{G}({z}_1)}{d\omega} \mathbf{D}({z}_1) =  j\bG^{*}({z}_1)\frac{d\mathbf{N}({z}_1)}{d\omega} - j\frac{d\mathbf{D}({z}_1)}{d\omega}
\end{align*}
\begin{align*}
\Rightarrow \left(j\bG^{*}({z}_1)\frac{d\mathbf{G}({z}_1)}{d\omega}\right)\left(\mathbf{\Gamma}_1^{-1} (\mathbf{I}_m
  -  \mathbf{A}^{*}_1){z}_1 \right) =  (\mathbf{I}_m
  -  \mathbf{A}^{*}_1){z}_1.
\end{align*}
Therefore, we have $ j\mathbf{G}^{*}(e^{j\omega_{1}})
\frac{d\mathbf{G}(e^{j\omega_{1}})}{d\omega} = \mathbf{\Gamma}_{1} $, and thus, $\bG(z)$ satisfies the constraint mentioned in \eqref{eqn:pb3}. Thus, the base step is verified.

\textbf{Inductive Step: } We assume that our problem is solvable for
$n - 1$ points, and use this to prove that the problem is solvable for $n$ points.
First we suitably modify the given data set of $n-1$ points. To this end, we define the following $2m\times 2m$ matrix,
\begin{align*}
\resizebox{\columnwidth}{!}{$
    {\mathbf{H}(z)  =} \begin{bmatrix}
    {2({z}-{z}_1)\mathbf{I}_m - ({z}+1)({z}_1 + 1)\mathbf{\Gamma}^{-1}_1} & {({z}+1)({z}_1 + 1)\mathbf{\Gamma}^{-1}_1\mathbf{A}_1^{*}} \\
    {-({z}+1)({z}_1 + 1)\mathbf{A}_1\mathbf{\Gamma}^{-1}_1 }&  {2({z}-{z}_1)\mathbf{I}_m +  ({z}+1)({z}_1 + 1)\mathbf{A}_1\mathbf{\Gamma}^{-1}_1\mathbf{A}_1^{*}}
    \end{bmatrix}.$}
\end{align*}
Now, for each $i \in \{1, 2, \ldots n - 1\}$, we define a modified data set as follows:
\begin{align}
\label{eqn:modifications}
\begin{split}
 \widehat{\bB}_{i} &= \mathbf{H}(e^{j\omega_{i+1}}){\begin{bmatrix}
\mathbf{I}_m & \mathbf{A}_{i+1}^T
\end{bmatrix}}^T \\
 \widehat{\mathbf{\Gamma}}_{i} &= \mathbf{\Gamma}_{i+1} -
 \frac{(\mathbf{I} -\mathbf{A}^{*}_{i+1}\mathbf{A}_{1}){\mathbf{\Gamma}_1}^{-1}(\mathbf{I}-\mathbf{A}^{*}_{1}\mathbf{A}_{i+1})}
 {(1-e^{j(\omega_{i+1}-\omega_1)})(1-e^{j(\omega_1-\omega_{i+1})})}.
\end{split}
\end{align}
Both of the above modifications result in new data sets that also
satisfy the conditions outlined in the section~\ref{sec:modifpickmat}, since  $\widehat{\bB}_i$s are neutral and $\widehat{\mathbf{\Gamma}}_i$s are still Hermitian. Thus, the modified data set is $\widehat{\mathbb{D}} := \{(\omega_{i+1},\widehat{\bB}_i,\widehat{\mathbf{\Gamma}}_i) |\; i = 1,...,n-1\}.$

It is important to note that the modifications in \eqref{eqn:modifications} are performed so that the Pick matrix of the new data set ($\widehat{\mathbb{D}}$) is the Schur complement of the Pick matrix of the original input data set ($\mathbb{D}$) with respect to $\mathbf{\Gamma}_{1}$~\cite{Schur}. Thus, using the Schur complement property, we can argue that if the original Pick matrix for the data set $\mathbb{D}$ is positive definite, then the new Pick matrix for
data set $\widehat{\mathbb{D}}$ will also be positive definite.

Using the original induction assumption, we know that we can solve for $\widehat{\mathbf{N}}(z),\widehat{\mathbf{D}}(z) \in \mathbb{C}^{m\times m}$ such that $\widehat{\mathbf{G}}(z) = \widehat{\mathbf{N}}(z){\widehat{\mathbf{D}}(z)}^{-1}$
satisfies the matrix all-pass filter design problem for the new data
set $\widehat{\mathbb{D}}$ containing $n-1$ points. Therefore, for $z_i := e^{j\omega_i}$, the following holds
\begin{align*}
\widehat{\bB}_i^*
\begin{bmatrix}
\widehat{\bD}(z_{i+1}) \\
-\widehat{\bN}(z_{i+1})
\end{bmatrix} = 0 \quad \forall \quad i = 1,\ldots,N-1
\end{align*}
\begin{align*}
    \widehat{\bB}_i = \bH(z_{i+1})\bB_{i+1} \Rightarrow    \bB_{i+1}^* {\bH(z_{i+1})}^*
\begin{bmatrix}
\widehat{\bD}(z_{i+1}) \\
-\widehat{\bN}(z_{i+1})
\end{bmatrix} = 0.
\end{align*}
We know that $\bB_1^*\bH(z_1)^* = \mathbf{0}_{m \times 2m}$. Therefore, $\forall \quad i =
1,\ldots,N$, we have
\begin{align}
\label{eqn:verfyconstr1}
\begin{bmatrix}
{\bD}(z) \\
-{\bN}(z)
\end{bmatrix}    :=  {\bH(z)}^*
\begin{bmatrix}
\widehat{\bD}(z) \\
-\widehat{\bN}(z)
\end{bmatrix} \Rightarrow  \bN(z_i)\bD(z_i)^{-1} = \bA_i.
\end{align}
We now verify the unitary nature of $\bN(e^{j\omega})\bD(e^{j\omega})^{-1}$ as follows:
\begin{align*}
\resizebox{\columnwidth}{!}{$
\bN^*(z)\bN(z)-\bD^*(z)\bD(z) =
\left[\begin{array}{ll}
\bN^*(z) & -\bD^*(z)
\end{array}\right]\left[\begin{array}{cc}
\bI_m & \mathbf{0}_{m\times m} \\
\mathbf{0}_{m\times m} & -\bI_m
\end{array}\right]\left[\begin{array}{l}
\bN(z) \\
-\bD(z)
\end{array}\right]$}
\end{align*}
\begin{align}
\begin{split}
% \resizebox{\columnwidth}{!}{$
\quad  &= \left[\begin{array}{ll}
\widehat{\bN}^*(z) & -\widehat{\bD}^*(z)
\end{array}\right] [\bH(z)^*]^{*} \bJ [\bH(z)]^{*} \left[\begin{array}{l}
\widehat{\bN}(z) \\
-\widehat{\bD}(z)
\end{array}\right]\\
\quad    &= \frac{-4(z_1 - z)^2}{z_1z}\left[\begin{array}{ll}
\widehat{\bN}^*(z) & -\widehat{\bD}^*(z)
\end{array}\right]  \bJ  \left[\begin{array}{l}
\widehat{\bN}(z) \\
-\widehat{\bD}(z)
\end{array}\right]\\
&= \frac{-4(e^{j\omega_1} - e^{j\omega} )^2}{e^{j(\omega_1+\omega)} }\left(
\widehat{\bN}^*(e^{j\omega} )\widehat{\bN}(e^{j\omega} ) -  \widehat{\bD}^*(e^{j\omega})\widehat{\bD}(e^{j\omega} )\right) \\
&= 0.
% \left[\begin{array}{ll}
% \widehat{\bN}^*(e^{j\omega} ) & -\widehat{\bD}^*(e^{j\omega})
% \end{array}\right]  \bJ  \left[\begin{array}{l}
% \widehat{\bN}(e^{j\omega} ) \\
% -\widehat{\bD}(e^{j\omega} )
% \end{array}\right]$
% $}
\end{split}
\end{align}
The last equality can be inferred from the original induction assumption that $\widehat{\mathbf{N}}(e^{j\omega} ){\widehat{\mathbf{D}}(e^{j\omega} )}^{-1}$
satisfies the matrix all-pass filter constraints. Therefore, \{$\bN^*(e^{j\omega})\bN(e^{j\omega})-\bD^*(e^{j\omega})\bD(e^{j\omega}) = 0\} \quad \forall \omega  \in (-\pi,\pi]$, which implies $\mathbf{N}(e^{j\omega})\mathbf{D}(e^{j\omega})^{-1} \text{ is unitary for all } \omega \in (-\pi,\pi]$.
Thus, we expand the expression of the solution for the original data set of $n$ points as follows (from equation~\eqref{eqn:verfyconstr1}):
\begin{align*}
    \mathbf{N}(z) := [2&({z}-{z}_1)\mathbf{I} - ({z}+1)({z}_1 + 1)\mathbf{A}_1\mathbf{\Gamma}^{-1}_1\mathbf{A}_1^{*}]\widehat{\mathbf{N}}(z) \\&+ [({z}+1)({z}_1 + 1)\mathbf{A}_1\mathbf{\Gamma}^{-1}_1]\widehat{\mathbf{D}}(z)\\
    \mathbf{D}(z) :=  [2&({z}-{z}_1)\mathbf{I} + ({z}+1)({z}_1 + 1)\mathbf{\Gamma}^{-1}_1]\widehat{\mathbf{D}}(z) \\&- [({z}+1)({z}_1 + 1)\mathbf{\Gamma}^{-1}_1\mathbf{A}_1^{*}]\widehat{\mathbf{N}}(z)
\end{align*}
\begin{align*}
    \mathbf{G}(z):= \mathbf{N}(z)\mathbf{D}(z)^{-1}.
\end{align*}
This $\mathbf{G}(z)$ satisfies the conditions in \eqref{eqn:pb1},
\eqref{eqn:pb2}, and \eqref{eqn:pb3}. This completes the mathematical
induction steps.

\emph{Note:} For a more intuitive understanding, we refer to
$\mathbf{\Gamma}_i$s as `group delay matrices' in the sequel, while
noting that they are unitarily similar to the original group delay
matrices $\mathbf{F}_i$ specified in \eqref{eqn:pb3}.

The group delay matrices ($\mathbf{\Gamma}_i$s) may not always be
available at the interpolating points in the given data set. In such
situations we can obtain suitable $\mathbf{\Gamma}_i$s through an
optimization process described in
Section~\ref{sec:optimization}.

\subsection{Optimizing group delay matrices at interpolating points}
\label{sec:optimization}
Theorem~\ref{thm:thm2} indicates that if the Pick matrix defined in
\eqref{eqn:pickmat} is positive definite, then an infinite number of
solutions exist that satisfy \eqref{eqn:pb3}. These solutions are
determined by specifying $\mathbf{\Gamma}_i  \forall i \in
\{1,2,\ldots n\}$. An interpretation of $\mathbf{\Gamma}_i$ is that of
a ``group delay'' matrix, mirroring the notion of group delay for
scalar filters. In the scalar case, the group delay is roughly the
delay that a signal envelope encounters when processed by a filter,
and is often sought to be minimized for several real-time
applications. In the matrix all-pass filtering case, as
considered in discussions such
as~\cite{fan2005principal,shemirani2009principal}, the group delay
matrix is related to the ``dispersion'' among the component waveforms,
representing the relative delay among them. Thus, it is prudent to
minimize a function of the group delay matrices for optimal
performance. Therefore, we choose to minimize the trace of these
matrices, since the diagonal elements of these of positive
semidefinite group delay matrices directly relates to the delay of the
various signal components.

Extending the consideration in~\cite{scalarallpass}, the trace of the
Pick matrix whose block diagonals contain the group delay matrices
$\mathbf{\Gamma}_i$ is a convex function that can be efficiently
minimized. The constraints on the trace can be expressed as linear
matrix inequalities (LMI), and semi-definite programming (SDP) can be
used to solve it efficiently. The optimization technique is to
% \emph{Objective function: }
minimize the sum of the diagonal elements in the Pick matrix, subject to
% NOTE: Sum of diagonal $\mathbf{\Gamma}_i$ for all $i \in \{1,2,\ldots n\}$ (for data set of size n) \\
% \emph{Constraints:}
the constraint that the modified Pick matrix defined in
Section~\ref{sec:modifpickmat} is positive definite (for the selected $\mathbf{\Gamma}_i$s).
\begin{align*}
    &\text{minimize }  \quad \text{Trace}(\bP)\\
    &\text{subject to } \quad  \bP \succ \mathbf{0},\quad \mathbf{\bP}_{ii} = \mathbf{\Gamma}_{i} \succ \mathbf{0} \text{ and}\\
    & \qquad\qquad\qquad \bP_{ik} =  \frac{\mathbf{B}_{i}^{*}\mathbf{J}\mathbf{B}_{k}}{1 - e^{j(\omega_{i}-\omega_{k})}}, \quad i \ne k \\
     &\qquad\qquad\qquad \text{ for all } i,k \in \{1,2,\ldots,n \}
\end{align*}
where, $\bP_{ik} = \mathbf{P}[(i-1)m+1:im , (k-1)m+1:km]$. The key
intuition behind the formulation of this optimization is that
minimizing the sum of the trace of the group delay matrices yields a
filter that has smaller phase variations, and thus have better phase
characteristics across frequencies without abrupt variation. Moreover,
the convex nature of the problem implies that efficient tools exist
that can solve the problem fast and with adequate numerical
accuracy. Simulations confirm that the above optimization problem
yields effective realizable matrix all-pass filters.

\section{Simulation and discussion}
\label{sec:simulations}
In this section, we use precoding in MIMO-OFDM communication systems
as a sample application of the proposed SNIP approach to obtain matrix
all-pass filters, though the same generalization is applicable to the
case of other applications, such as audio, as well. We consider a
MIMO-OFDM system that uses an $N_{\text{FFT}}$ sized IDFT on a
wireless Rayleigh fading frequency selective channel. The OFDM case is
particularly interesting, since the channel within each FFT sub-band
(referred to as \textit{subcarrier}) can be assumed to be frequency
flat (i.e., flat fading). Thus, assuming that the channel in the
$k$-th subcarrier is modeled as the matrix $\bH[k] \in
\mathbb{C}^{m\times m}$, this matrix can be decomposed using the SVD
as $\bH[k] = \bU[k]\mathbf{\Sigma}[k]\bV^*[k]$, where $\mathbf{V}[k]$
is typically used to \emph{precode} (pre-multiply) the data at the
transmitter to enable channel parallelization. In other words, the
symbol vector at every subcarrier is pre-multiplied by $\mathbf{V}[k]$,
$k = 0,1,2, \ldots N_{\text{FFT}}$. However, since the channel is
typically estimated at the receiver, the receiver must feedback
$\mathbf{V}[k]$ for all subcarriers to the transmitter. This can lead
to a significant overhead in terms of feedback requirements. In
contrast, the technique from Section~\ref{sec:DTallpass} can offer two
key advantages:
\begin{itemize}
    \item By viewing $\bV[k]$ as samples of $\bV(e^{j\omega})$, $\omega \in (-\pi,\pi]$, we
can use the technique from Section~\ref{sec:DTallpass} to obtain a rational
transfer function that represents the precoding operation.
% This can yield a more compact representation of the
% precoder directly as a function of $\omega$, thereby reducing
% the feedback requirement.
Thus, we need to feedback only the coefficients of the transfer function.
\item A compact rational transfer function would yield a simple
system of linear constant coefficient difference equations (LCCDE)
 that can be implemented in the time domain,
leading to a more compact representation. We can thereby compute the
precoder matrix by simple matrix addition and multiplication
operations, unlike the matrix exponential and logarithm operations
that are used for geodesic interpolation. More importantly, these
alternate approaches do not result in realizable filters, thereby
making them suitable for only frequency domain processing.
% In fact, if the
% LCCDE has fewer coefficients (much less than $N_{\text{FFT}}$),
% the precoding can be achieved with fewer multiplications
% than in the case of traditional frequency domain precoder
% multiplication~\cite{SASPWEB2011}.
\end{itemize}
To emphasize these points, consider the precoding problem in a frequency
selective MIMO channel for a wireless system, wherein the matrix
precoding function $\mathbf{V}(e^{j\omega})$ has a compact
representation in terms of $\omega$. For several such randomly
generated channels, we compare the performance of our interpolation
technique with the traditional geodesic
interpolation~\cite{GeodesicInterpolation} and Givens rotation based parameterization~\cite{Flagdist,Givens_rot}. We implemented the above
discussed discrete-time matrix all-pass filter design method on a $2
\times 2$ MIMO system, with an input data set of $6$ elements. That
is, the unitary precoding filter and the corresponding group delay
matrices are known at six frequencies in $(-\pi, \pi]$.  The data set
  of unitary matrices was generated from a wireless MIMO Rayleigh
  fading channel ($\mathbf{H}(e^{j\omega})$), and taking the singular value
  decomposition as follows:
\begin{align*}
    \bH(e^{j\omega}) = \bU(e^{j\omega}) \mathbf{\Sigma}(e^{j\omega}) \bV^*(e^{j\omega}),
\end{align*}
where $\bU(e^{j\omega}) \text{ and } \bV(e^{j\omega})$ are
unitary. The channels follow the the ITU Vehicular A power delay
profile~\cite{recommendation1997guidelines}. The unitary matrices
\{$\mathbf{V}(e^{j\omega})$\} that correspond to the right singular
vectors of the channel are evaluated at ($\omega
\in$\{$-0.99\pi,-\frac{3\pi}{5},-\frac{\pi}{5},\frac{\pi}{5},\frac{3\pi}{5},0.99\pi$\}), both for the
modified SNIP approach as well as the frequency domain approaches geodesic and Givens rotation based parameterization. At frequencies other than those in data set, we interpolate
the unitary matrices using the respective techniques. To quantify the
accuracy of interpolation method, we plot the error between the
interpolated unitary matrix $\{{\hat{\bV}(e^{j\omega})}\}$ and the
unitary matrix\{${{\bV}(e^{j\omega})}$\} that is realized from the
channel matrix \{$\mathbf{H}(e^{j\omega})$\} for all $\omega$.
We remark here that the frequency domain approach does not yield
a realizable rational matrix all-pass filter, while our proposed
approach is guaranteed to do so.

\figurename{\ref{fig:frobenius_corrected_2X2}} and
\figurename{\ref{fig:flag_corrected_2X2}} present the error as
measured both using the Frobenius norm, and Flag Distance
in~\cite{Flagdist} as performance metrics.

The Flag distance gives a measure of distance between two matrices
which are considered equivalent upon multiplication by a diagonal
unitary matrices. The SVD is not unique for a matrix $\mathbf{H}(e^{j\omega})$,
since multiplying the right and left singular vectors by diagonal
unitary matrices results in equivalent
precoders~\cite{pitaval2013codebooks}. The channel capacity can be
achieved on each subcarrier by precoding with any equivalent
right-singular matrix extracted from channel matrix ($\bH(e^{j\omega})$) using the
SVD~\cite{pitaval2013coding} (due to the Flag manifold structure).

The error values in the plots are averaged over 1000 channel
iterations.
% The error plots for this approach are shown in \figurename{\ref{fig:frobenius_corrected_2X2}} and
% \figurename{\ref{fig:flag_corrected_2X2}}.
We can observe from the simulations that the output of the
discrete-time matrix all-pass filter exactly matches the unitary
matrix in data set at the frequencies in data set (referred to as
interpolating points), thereby confirming that the realizable filter
satisfies the specified frequency domain constraints.
% At the non interpolating points, the error plot of our approach is very close to that of Geodesic interpolation
% method, in spite of the fact that we impose the rational filter
% realization constraint.

While the performance of our discrete-time matrix all-pass filter
design technique based on SNIP (without optimizing group delays) is
comparable to geodesic interpolation and Givens rotation based parameterization, it is evident that there is
still a gap in performance when compared to the geodesic and givens rotation based approaches . This can be attributed to the fact that there may exist
other choices of group delay matrices
% $\mathbf{\Gamma}_i$
% that are being specified to the
% algorithm are numerically evaluated from the filter response, and we
\{$\mathbf{\Gamma}_i$\} that are unitarily
similar matrices (see~\eqref{eqn:pb3}) which are not considered.
% , and that would definitely yield a better fit
% to the desired phase response.
We address this limitation using the optimization based approach to
filter realization described in Section~\ref{sec:optimization}.
% In this case, for a $2 \times 2$
% MIMO-OFDM system with the data set containing $4$ points, unitary
% matrices $\mathbf{V}(e^{j\omega})$ are specified at certain
% frequencies ($\omega \in$\{$-\pi,-\frac{\pi}{3},\frac{\pi}{3},\pi $
% \}).
The group delay ($\mathbf{\Gamma}_i$) matrices are obtained using
optimization, and then the matrix all-pass filter is constructed using
the technique presented in~\ref{sec:interpwithgd}.
\begin{figure}
\begin{center}
\includegraphics[width=0.5\textwidth]{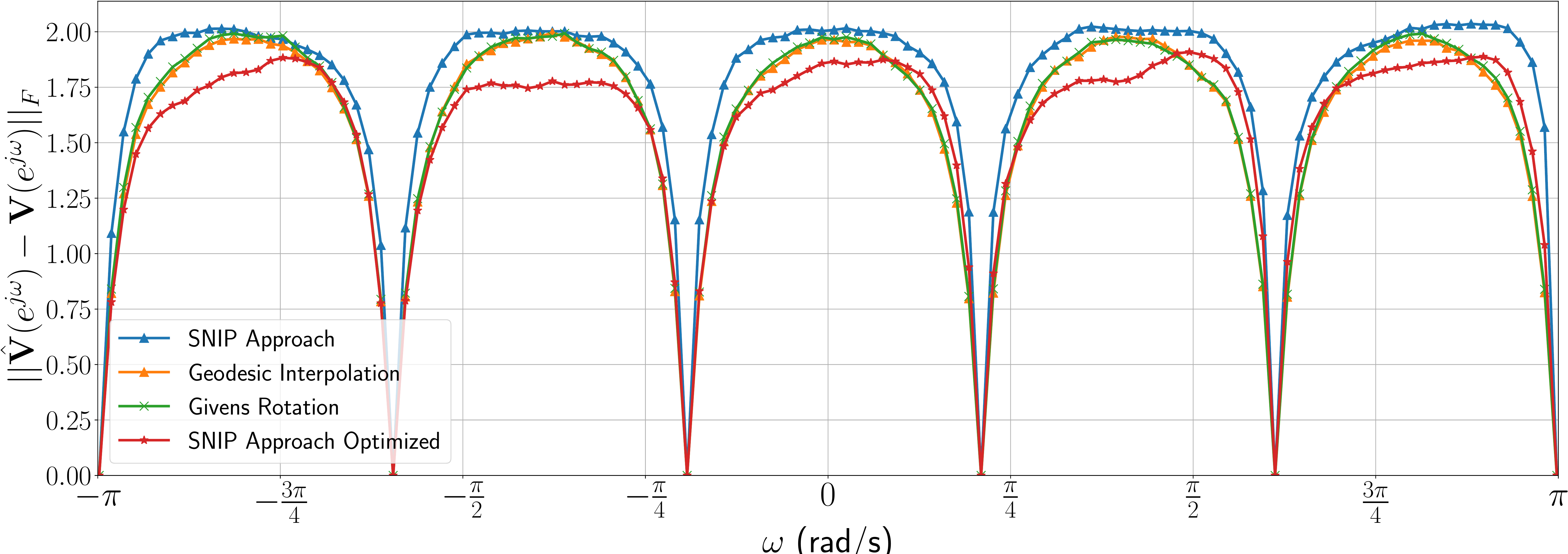}
\caption{\label{fig:frobenius_corrected_2X2}Frequency domain error
  measured for matrix all-pass filters constructed from samples for a
  $2 \times 2$ MIMO system, with Frobenius norm as error measure.}
\label{fig:frobopt}
\end{center}
\end{figure}
\begin{figure}
\begin{center}
\includegraphics[width=0.5\textwidth]{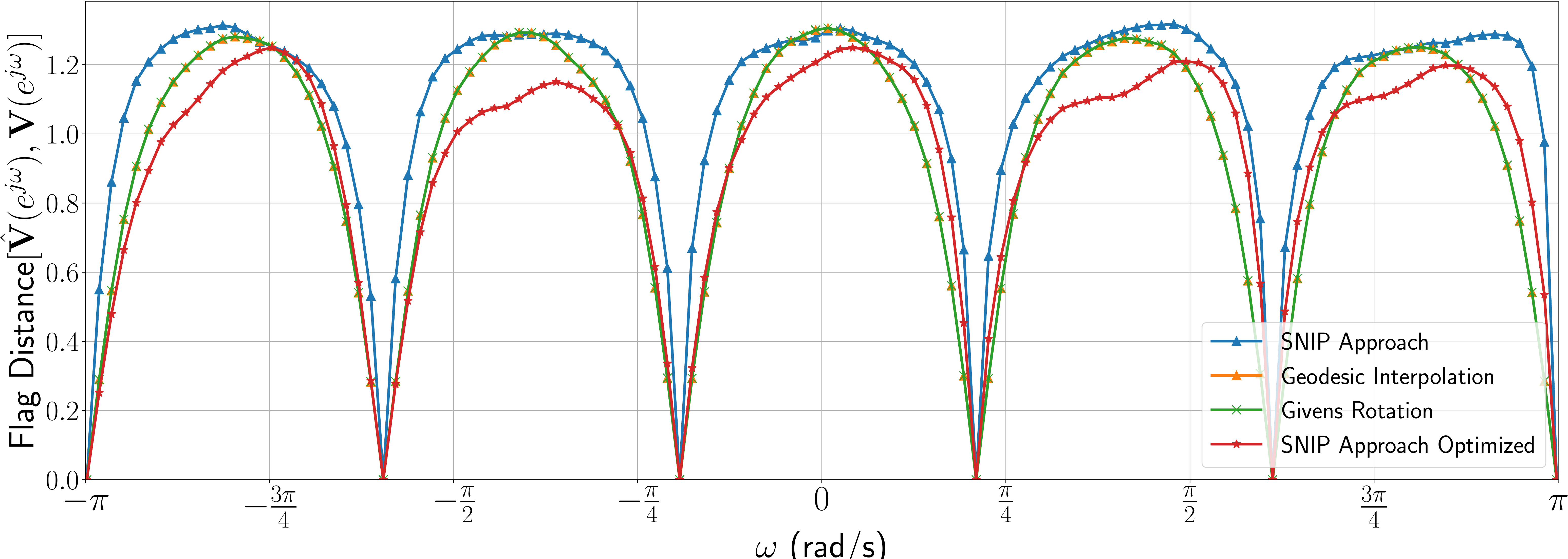}
\caption{\label{fig:flag_corrected_2X2} Frequency domain error
  measured for matrix all-pass filters constructed from samples for a
  $2 \times 2$ MIMO system, with Flag Distance as error measure.}
\label{fig:flagopt}
\end{center}
\end{figure}
\begin{figure}
\begin{center}
\includegraphics[width=0.5\textwidth]{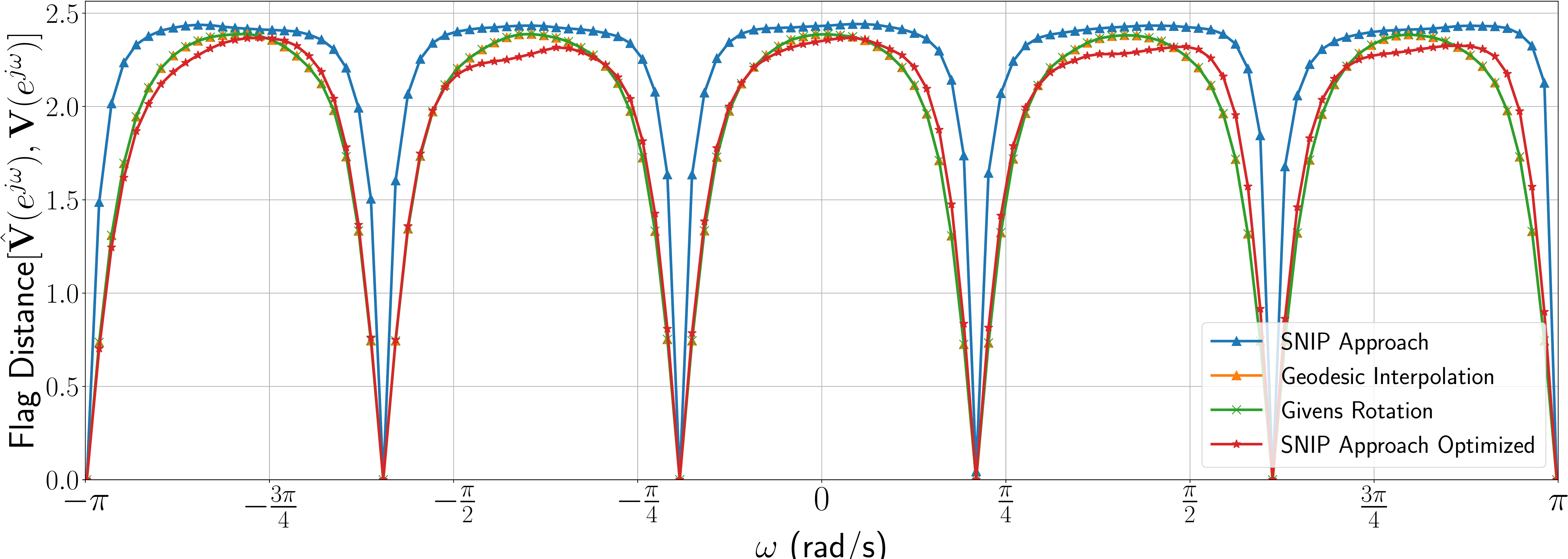}
\caption{\label{fig:flag_corrected_4X4} Frequency domain error
  measured for matrix all-pass filters constructed from samples for a
  $4 \times 4$ MIMO system, with Flag Distance as error measure.}
\end{center}
\end{figure}

\begin{figure}
\begin{center}
\includegraphics[width=0.5\textwidth]{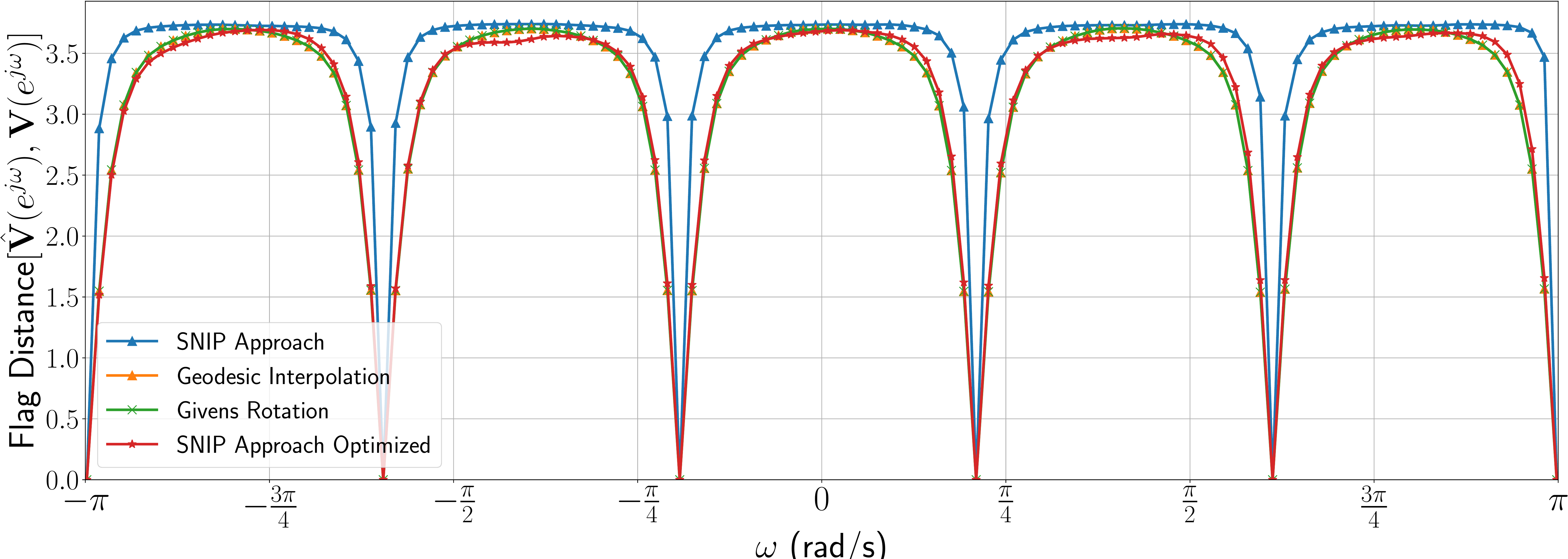}
\caption{\label{fig:flag_corrected_8X8} Frequency domain error
  measured for matrix all-pass filters constructed from samples for a
  $8 \times 8$ MIMO system, with Flag Distance as error measure.}
\end{center}
\end{figure}
It is evident from \figurename{\ref{fig:frobopt}} and
\figurename{\ref{fig:flagopt}} that optimally choosing the
$\mathbf{\Gamma}_i$ (group delay) matrices results in better
performance. Thus, by constraining the Pick matrix to be positive
definite and optimizing $\mathbf{\Gamma}_i$ matrices, we are able to
achieve significantly improved interpolation, even while obtaining
realizable (rational) matrix all-pass filters. Therefore, by using SNIP based matrix all-pass filtering technique, not only do we
have computationally efficient LCCDE realizations, but we are also
able to obtain performance comparable to a frequency-domain only
technique like geodesic interpolation (which does not yield
LCCDEs). \figurename{\ref{fig:flag_corrected_4X4}} and \figurename{\ref{fig:flag_corrected_8X8}} depicts the
corresponding performances for a $4\times 4$ system and a $8\times 8$ system respectively, this confirms that the
optimization based construction is effective for higher sizes as well
and the performance compares favourably with the geodesic technique in
spite of the realizability requirement.

% Inorder to  depict the
In terms of complexity, using the SNIP approach would incur complexity
in evaluating $\bN(z)$, $\bD(z)$ (\emph{from their coefficients}) and finally evaluating $\bG(e^{j\omega}) =
\bN(e^{j\omega})\bD(e^{j\omega})^{-1}$, while the geodesic based
approach typically
% uses SVD based approaches.
% Therefore, the
% complexity of both approaches for $m\times m$ matrices is
% $O(m^3)$.
% However, the geodesic approach
requires the computation of
the matrix exponential for each subcarrier, thereby needing many more
computations than the SNIP approach.
% which boils down to a substitution
% of $z = e^{j\omega}$ upon evaluating $\bG(z)$.
To practically compare
the computational complexity of the geodesic as well as the SNIP based
approach of evaluating the frequency response at the intermediate
frequencies, we compare the time taken and memory used to construct a
precoder from the given data set. The comparisons between the the
geodesic interpolation and SNIP based matrix all-pass filter technique
are listed in the \tablename{\ref{table:time},
  \ref{table:memory}}. These values are computed on a standard Google
Colab GPU free tier, consisting of a 2 core Intel Xeon 2.2 GHz CPU with 13 GB RAM~\cite{google_colab}.

\begin{table}
\begin{center}
\tiny
\begin{tabular}{|c|c|c|c|}
% \newlength\q
% \setlength\q{\dimexpr .5\textwidth -2\tabcolsep}
% \noindent\begin{tabular}{p{\q}|p{\q}|p{\q}|p{\q}}
\hline
{\color[HTML]{0000FF} \textbf{$m \times m$ MIMO}} & {\color[HTML]{0000FF} \textbf{SNIP Approach (in ms)}}    & {\color[HTML]{0000FF} \textbf{Givens Rotation (in ms)}}     & {\color[HTML]{0000FF} \textbf{Geodesic (in ms)}}             \\ \hline
{\color[HTML]{008000} \textbf{$m = 2$}}    & {\color[HTML]{212121} \textbf{0.5651}} & {\color[HTML]{212121} \textbf{1.7607}}  & {\color[HTML]{212121} \textbf{6.0663}} \\ \hline

{\color[HTML]{008000} \textbf{$m = 3$}}    & {\color[HTML]{212121} \textbf{0.5951}} & {\color[HTML]{212121} \textbf{3.3826}} & {\color[HTML]{212121} \textbf{7.5663}} \\ \hline

{\color[HTML]{008000} \textbf{$m = 4$}}    & {\color[HTML]{212121} \textbf{0.6371}} & {\color[HTML]{212121} \textbf{5.9818}} & {\color[HTML]{212121} \textbf{7.9641}} \\ \hline

{\color[HTML]{008000} \textbf{$m = 5$}}    & {\color[HTML]{212121} \textbf{0.6341}} & {\color[HTML]{212121} \textbf{8.8631}}  & {\color[HTML]{212121} \textbf{8.8533}} \\ \hline

{\color[HTML]{008000} \textbf{$m = 6$}}    & {\color[HTML]{212121} \textbf{0.6514}} & {\color[HTML]{212121} \textbf{12.2704}} & {\color[HTML]{212121} \textbf{9.1700}} \\ \hline

{\color[HTML]{008000} \textbf{$m = 7$}}    & {\color[HTML]{212121} \textbf{0.6518}} & {\color[HTML]{212121} \textbf{16.1756}} & {\color[HTML]{212121} \textbf{9.5353}} \\ \hline

\end{tabular}
\caption{Average time taken to construct a precoder matrix from the feedback data-set}
\label{table:time}
\end{center}
\end{table}

\begin{table}
\begin{center}
\tiny
\begin{tabular}{|c|c|c|c|}
\hline
{\color[HTML]{0000FF} \textbf{$m \times m$ MIMO}} & {\color[HTML]{0000FF} \textbf{SNIP Approach (in KB)}}  & {\color[HTML]{0000FF} \textbf{Givens Rotation (in KB)}}           & {\color[HTML]{0000FF} \textbf{Geodesic (in KB)}}                  \\ \hline
{\color[HTML]{008000} \textbf{$m = 2$}}    & {\color[HTML]{212121} \textbf{3.3347}} & {\color[HTML]{212121} \textbf{7.2606}} & {\color[HTML]{212121} \textbf{7.0726}}  \\ \hline

{\color[HTML]{008000} \textbf{$m = 3$}}    & {\color[HTML]{212121} \textbf{4.4608}}  & {\color[HTML]{212121} \textbf{8.2336}}  & {\color[HTML]{212121} \textbf{7.7833}} \\ \hline

{\color[HTML]{008000} \textbf{$m = 4$}}    & {\color[HTML]{212121} \textbf{5.7499}}  & {\color[HTML]{212121} \textbf{9.3939}}  & {\color[HTML]{212121} \textbf{8.700}}   \\ \hline

{\color[HTML]{008000} \textbf{$m = 5$}}    & {\color[HTML]{212121} \textbf{6.9731}}  & {\color[HTML]{212121} \textbf{10.3815}} & {\color[HTML]{212121} \textbf{9.9756}}  \\ \hline

{\color[HTML]{008000} \textbf{$m = 6$}}    & {\color[HTML]{212121} \textbf{8.2406}}  & {\color[HTML]{212121} \textbf{12.1696}} & {\color[HTML]{212121} \textbf{12.9178}} \\ \hline

{\color[HTML]{008000} \textbf{$m = 7$}}    &
{\color[HTML]{212121} \textbf{9.6660}}  & {\color[HTML]{212121} \textbf{13.8401}} & {\color[HTML]{212121} \textbf{16.2282}}  \\ \hline
\end{tabular}
\caption{Average peak size used to construct a precoder matrix from the feedback data-set}
\label{table:memory}
\end{center}
\end{table}
It is evident from the simulation results that, SNIP based approach
takes much less time and consumes less memory than geodesic
interpolation to construct a precoder (unitary matrix). This
translates to much lower complexity, especially for modern millimeter
wave communication systems that employ large antenna arrays and FFT sizes.

\section{Conclusion}
\label{sec:conclusion}
All-pass filtering of vector signals is typically done in the
frequency domain using the DFT. However, this could lead to
incorrect filtering, and may require more complex representation of
filters.  In this paper, we have presented a method to obtain a
realizable (rational) matrix all-pass filter by extending the SNIP to
the boundary case, leading to an LCCDE implementation.  If the matrix
valued phase response is specified only at $n$ distinct values of
$\Omega$ (or $\omega$), we can obtain an $n$ pole all-pass filter that
exactly satisfies the phase constraints at these points. In addition,
we show that, if the values of the group delay for the interpolating
$\omega$ are specified,
% the resulting filter is unique,
while often
ensuring a compact representation.  The group delay matrices at the
interpolating points in the problem statement can be optimized or
tuned to control the phase response for the remaining frequencies,
which is also done in~\cite{scalarallpass} for the scalar case.
Simulations reveal that the method proposed can significantly
outperform other optimization based approaches with much lower
complexity. Future work would focus on stability of the solution
under perturbation and approaches to minimize filter order as well as
the problem of generating a good data sets that minimize the
overfitting and undersampling for this boundary SNIP-based interpolation.
% technique is yet to be explored.
% In the base step of our solution construction, we took $\bN(z)$ and $\bD(z)$ to be polynomials in $z$ with order 1. So there still exists a scope of exploring different orders of the polynomial for $\bN(z)$ and $\bD(z)$ for an optimal solution to the interpolation problem.
% The stability of the designed filter is also yet to be explored thoroughly.

\renewcommand{\bibfont}{\footnotesize}
\bibliographystyle{IEEEtran} \bibliography{IEEEabrv,references}
\end{document}